\newcommand\so{\bgroup\markoverwith{\textcolor{red}{\rule[0.5ex]{2pt}{0.4pt}}}\ULon}
\newtheorem{define}{Definition}%[section]
\newtheorem{proposition}{Proposition}%[section]
\newtheorem{remark}{Remark}
\newtheorem{theorem}{Theorem}
\newtheorem{lemma}{Lemma}
\def\X{\mathcal{X}}
\def\U{\mathcal{U}}
\def\D{\mathcal{D}}
\def\M{\mathcal{M}}
\def\Sm{\mathcal{S}}
\def\Om{\mathcal{O}}
\def\e{\varepsilon}
\def\Prob{\mathrm{Pr}}
\def\R{\mathbb{R}}
\def\Z{\mathbb{Z}}
\def\Lap{\mathrm{Lap}}
\def\Lm{\mathrm{Lap}_m}
\def\Ll{\mathcal{L}}
\def\Geo{\mathrm{Geom}}
\def\Gm{\mathrm{Geom}_m}
\def\ai{\alpha_{1i}}
\def\aii{\alpha_{2i}}
\def\Prd{\mathcal{P}}
 \def\L{\mathcal{L}} 
\newcommand{\ve}[1]{\boldsymbol{#1}} 
\newcolumntype{I}{!{\vrule width 1pt}}
\newlength\savedwidth
\newcommand{\whline}{\noalign{\global\savedwidth\arrayrulewidth
                            \global\arrayrulewidth 1.2pt}
                   \hline
                   \noalign{\global\arrayrulewidth\savedwidth}
                   }
\begin{document}
\IEEEoverridecommandlockouts
\title{More Flexible Differential Privacy: The Application of Piecewise Mixture Distributions in Query Release}
\author{David B. Smith, Kanchana Thilakarathna and Mohamed Ali Kaafar\vspace{-20pt}
\IEEEcompsocitemizethanks{\IEEEcompsocthanksitem David B. Smith, Kanchana Thilakarathna, Mohamed Ali Kaafar are with Data61-CSIRO, Eveleigh, NSW 2015, Australia. E-mail: \{firstname.lastname\}@data61.csiro.au\protect\\}}

\IEEEtitleabstractindextext{
\begin{abstract}

There is an increasing demand to make data ``open" to third parties, as data sharing has great benefits in data-driven decision making. However, with a wide variety of sensitive data collected, protecting privacy of individuals, communities and organizations, is an essential factor in making data ``open". The approaches currently adopted by industry in releasing private data are often ad hoc and prone to a number of attacks, including re-identification attacks, as they do not provide adequate privacy guarantees. While differential privacy has attracted significant interest from academia and industry by providing rigorous and reliable privacy guarantees, the reduced utility and inflexibility of current differentially private algorithms for data release is a barrier to their use in real-life. This paper aims to address these two challenges. First, we propose a novel mechanism to augment the conventional utility of differential privacy by fusing two Laplace or geometric distributions together. We derive closed form expressions for entropy, variance of added noise, and absolute expectation of noise for the proposed piecewise mixtures. Then the relevant distributions are utilised to theoretically prove the privacy and accuracy guarantees of the proposed mechanisms. Second, we show that our proposed mechanisms have greater flexibility, with three parameters to adjust, giving better utility in bounding noise, and mitigating larger inaccuracy, in comparison to typical one-parameter differentially private mechanisms. We then empirically evaluate the performance of piecewise mixture distributions with extensive simulations and with a real-world dataset for both linear count queries and histogram queries. The empirical results show an increase in all utility measures considered, while maintaining privacy, for the piecewise mixture mechanisms compared to standard Laplace or geometric mechanisms.
\end{abstract}

\begin{IEEEkeywords}
Differential privacy, Laplace mechanism, piecewise mixture distributions, query release
\end{IEEEkeywords}}
%\begin{CCSXML}
%<ccs2012>
%<concept>
%<concept_id>10002978</concept_id>
%<concept_desc>Security and privacy</concept_desc>
%<concept_significance>500</concept_significance>
%</concept>
%<concept>
%<concept_id>10002978.10002991.10002995</concept_id>
%<concept_desc>Security and privacy~Privacy-preserving protocols</concept_desc>
%<concept_significance>300</concept_significance>
%</concept>
%<concept>
%<concept_id>10003752.10010070.10010111.10011735</concept_id>
%<concept_desc>Theory of computation~Theory of database privacy and security</concept_desc>
%<concept_significance>300</concept_significance>
%</concept>
%</ccs2012>
%\end{CCSXML}

%\ccsdesc[500]{Security and privacy}
%\ccsdesc[300]{Security and privacy~Privacy-preserving protocols}
%\ccsdesc[100]{Theory of computation~Theory of database privacy and security}
%\printccsdesc

\maketitle
\IEEEdisplaynontitleabstractindextext
\IEEEpeerreviewmaketitle
\section{Introduction}
A number of \emph{open data platforms}\footnote{http://data.gov.au}\footnote{http://www.openiot.eu}  and Web API standards\footnote{http://www.opengeospatial.org/docs/is}\footnote{http://www.hypercat.io}  are enabling researchers and data analysts to provide revolutionary attractive services, such as travel updates, smart parking, and health monitoring, in a wide variety of domains. However, the sensitive nature of certain data categories, such as health care, is a barrier to evolving these data sharing platforms. Many government authorities and policy makers have been proactive in imposing rules and regulations to safeguard individual privacy and security in releasing data \cite{Weber2015618}. Privacy preserving online data release has then become more important than ever so as to enable the use of the data while not breaching individuals privacy.
%With the ever growing popularity of Internet of Things (IoT), a huge amount of data is being generated from variety of sources \cite{manyika2015internet}. In order to unravel the true potential of IoT, this data needs to be shared and made available to third parties such as analysts and developers. This has paved the way for number of \emph{open data platforms}\footnote{http://data.gov.au}\footnote{http://www.openiot.eu}  and Web API standards\footnote{http://www.opengeospatial.org/docs/is}\footnote{http://www.hypercat.io} to release IoT data. On the other hand, releasing IoT data can be intrusive and unsafe for individuals and communities as much as they are revolutionary in providing attractive services, due to extreme sensitivity of certain data categories such as health care, social and economical data of individuals. Moreover,

There have been a number of mechanisms proposed to release private data that are primarily based on either data aggregation \cite{Acs2014}, data elimination \cite{el2009evaluating} or data anonymization \cite{bayardo2005data}. However, the majority of these mechanisms do not provide guaranteed privacy and they are often susceptible to re-identification attacks \cite{Zang2011, Xu2017}. The pervasive availability of external public data sources such as social networking data makes the problem of re-identification with linkage attacks even more acute \cite{Su2017}. There have been then many mechanisms based on adding noise to original query answers based on various bounded \cite{marley2011method}, and unbounded \cite{dp-book, li2010}, probabilistic distributions. Although bounded noise mechanisms provide some level of protection, privacy can not be guaranteed as the true value can be recovered with iterative queries. On the other hand, unbounded noise mechanisms such as Laplace or Gaussian noise can provide adequate privacy guarantees, but they often do not provide required accuracy to analysts \cite{dankar2013practicing}.

%\kt{Add any other mechanisms used now for releasing private data}.
%
Differential privacy has gained much attention in the recent past as one of the unbounded noise mechanisms to provide private data release. Differential privacy gives provable guarantees for \emph{indistinguishability}, and hence privacy, of a particular participant's record, and sensitive information, in a database -- as there is no further privacy violation whether or not their information is in the database.
From the initial seminal work by Dwork et al. \cite{dwork2006}, adding noise from a zero-mean Laplace distribution to data queries to ensure robust privacy guarantees, via \emph{$\e$-differential privacy}, e.g., \cite{dp-book, li2010, hardt2012, fan2012, hay2010} has become common-place. The scale parameter of the Laplace distribution directly relates to the inverse of the privacy budget $\e$, where smaller $\e$ and greater scale parameter implies greater privacy, but less accuracy to the analyst, implying a direct trade-off between the two. Despite many realms of academic work to date and its promise in providing provable privacy guarantees, differential privacy has not yet been adopted by many in industry and government agencies primarily due to: i) Concerns over reduced utility of differential private query release \cite{dankar2013practicing}; and ii) Inherent to all the described mechanisms for differential privacy are a direct function of privacy budget $\e$, always with a fixed sensitivity, with little extra flexibility for the query-mechanism designer, in particular for linear queries.

In this paper, we propose a novel mechanism to draw noise for differential private query release based on piecewise mixture distributions, where two separate distributions with privacy budget parameter $\e$ and $r \e$, $r>0$ are fused together at a break-point $c_t$. The proposed solution comprises two Laplace distributions and two symmetric geometric (discrete Laplace) distributions respectively, for fractional query and integer query release. Our mechanisms provide increased utility benefits while keeping the same privacy guarantees provided by standard Laplace or geometric distributions. Moreover, the dataset curator is provided with three parameters, namely $\e$, $r \e$ and $c_t$, to control the tradeoffs of privacy loss, utility and accuracy as opposed to one parameter to manipulate in these typical differential privacy mechanisms -- while the desirable properties of the Laplace and geometric mechanisms are maintained. %With the introduction of $c_t$, it also makes differential privacy quantifiable and explainable.
For brevity, we only investigate the benefits for integer counting and histogram queries, but the proposed mechanisms are obviously extensible to other types of queries such as linear fractional queries.

The paper makes the following contributions:
\begin{itemize}
\item First, we prove that any mechanism where privacy-seeking perturbations are kept within desired bounds such as drawing noise from truncated Laplace or geometric distributions, in many cases of linear queries do not preserve differential privacy.

\item Then, we propose two piecewise mixture distributions, as mixtures of Laplace and symmetric geometric mechanisms respectively, that provide guaranteed privacy with enhanced utility and more flexibility to the query designer.

\item We also derive closed-form expressions for the absolute-first and second moments of the proposed piecewise mixture mechanisms, as well as the entropy and present a new general privacy budget parameter $\zeta_{\e}$ relevant to piecewise mixture mechanisms, with privacy-preserving properties analogous to $\e$.

\item Finally, we evaluate the proposed mechanisms with extensive probabilistic simulations as well as with a real-world data set particularly suited to private linear querying. The results show that proposed piecewise mixture mechanisms provides better utility for the analysts compared to standard Laplace or geometric mechanisms.

%\item We demonstrate that a discrete-Laplace random variable and hence \ds{geometric} mechanism, can not be generated from simple post-processing (such as rounding) of a standard Laplace mechanism as motivation to further contribution.
%\item We theoretically demonstrate that for both mixture-based mechanisms suitable differential privacy can be achieved, with typical privacy loss able to be adjusted.

\end{itemize}

The paper is organized as follows: Section II provides background on differential privacy and related work on providing increased utility. Then, in Section III, we propose piecewise mixture distribution mechanisms, defining them and providing their statistical properties, with derivations to analytically prove their privacy guarantees, as well as determining utility-privacy tradeoffs. We evaluate and compare their performance with traditional mechanisms by extensive analysis, simulations, and, importantly, with a real-world dataset in Section IV. Section V concludes the paper and provides some future directions.

%Background %%%%%%%%%%
%

\vspace{-14pt}

\section{Background on Differential Privacy}

Let $\U$ be the set of possible data points. A database of $n$ users contains the data points for each user and can be viewed as a vector in $\U^n$. Let $\D$ := $\U^n$  be the collection of databases with $n$ users. Two databases $\X^{(1)} \in \D$ and  $\X^{(2)} \in \D$  are called neighboring (denoted by $\X^{(1)} \sim \X^{(2)}$), if they differ by at most one coordinate.

\begin{define}
($\e$-differential privacy) A randomized mechanism (function)  $\M : \D \rightarrow \Om$ preserves
$\e$-differential privacy, if for any two neighboring databases $\X^{(1)} \sim \X^{(2)}$, and any possible set
of output $\Sm \subseteq \Om$, the following hold:
\begin{equation}
\Prob[\M(\X^{(1)})\in \Sm] \leq \exp(\e) \cdot \Prob[\M(\X^{(2)}) \in \Sm],
\end{equation}
where the randomness comes from the coin flips of $\M$\cite{dp-book}.
\end{define}

\begin{remark}
If $\Om$ is a countable set, then we can also have the inequality for each $x \in \Om$,
\begin{equation}
\Prob[\M(\X^{(1)})=x] \leq \exp(\e) \cdot \Prob[\M(\X^{(2)})=x],
\end{equation}
\end{remark}

\begin{define}
$\ell_1$ sensitivity. Let $f : \D \rightarrow \R^d$ be a deterministic function. The $\ell_1$-sensitivity of $f$, denoted by $\Delta f$, is
\begin{align}
\max&_{\X^{(1)}\sim \X^{(2)}} \|f(\X^{(1)}) - f(\X^{(2)})\|_1 =\nonumber\\
&\max_{\X^{(1)}\sim \X^{(2)}} \sum_{i=1}^d |f_i(\X^{(1)}) - f_i(\X^{(2)})|
\end{align}
\end{define}

The $\ell_1$, or global, sensitivity of a counting query and a histogram query $\Delta f=1$, since removing one user from $\X$ affects the outcome of the query by $1$ (in the case of histograms the cells or bins are disjoint).

%\section{Laplace mechanism and geometric Mechanism}

It is widely known that adding noise from an appropriately scaled zero-mean Laplace distribution preserves differential privacy.
For discrete value querying, such as for linear counting queries and histogram queries, the discrete analog of the Laplace distribution, known as the symmetric geometric distribution \cite{inusah2006}, has been widely studied as also preserving $\e$-differential privacy, e.g., \cite{shi2011,barthe2013}. %As $\e$ increases applying a symmetric geometric mechanism improves utility over a standard Laplace mechanism.%shi2011,

\subsection{Laplace mechanism}

The zero-mean Laplace distribution (a symmetric version of the exponential function) has the following probability density function (PDF):
\begin{equation}
\Lap(x|b)=\frac{1}{2b}\exp\left(\frac{-|x|}{b}\right).
\end{equation}
We denote a random variable drawn from a Laplace distribution with scale $b$ as $Y \sim \Lap(b)$. For a linear query, it has been widely demonstrated that adding noise from a zero-mean Laplace distribution with scale $b=\frac{\Delta f}{\e}$ satisfies $\e$-differential privacy.

The Laplace mechanism is
\begin{equation}
\M_L(\X)=f(\X)+(Y_1,\ldots,Y_k)
\end{equation}
where $Y_i$ are i.i.d. random variables drawn from $\Lap(\frac{\Delta f}{\e})$.

Hence, for any queries on neighboring databases $\X^{(1)}$, $\X^{(2)}$, then with privacy loss defined as
\begin{equation}
\L^\xi_{\M(X^{(1)})\|\M(X^{(2)})}=\ln \left(\frac{\Prob[\M(\X^{(2)})=\xi]}{\Prob[\M(\X^{(1)})=\xi]}\right)\label{ploss}
\end{equation}
then for this mechanism $\L^\xi \leq \ln\left( \exp\left(\frac{\e\|f(\X^{(1)})-f(\X^{(2)})\|_1}{\Delta f}\right)\right)$ so $\L^\xi \leq \e$, for any query output $\xi$, where $\ln(\cdot)$ denotes the natural logarithm.

\subsection{Geometric mechanism}
The Laplace mechanism adds real-number noise to any linear query, giving differential privacy, and for a general integer counting query a form of post-processing, to which $\e-$differential privacy is immune, the real-number noise can be rounded to the closest integer. It has been reported that drawing noise from a symmetric geometric distribution $\Geo\left(\exp\left(\frac{\e}{\Delta f}\right)\right)$ for linear query answering is the discrete (integer) analogue of the Laplace distribution, e.g., ~\cite{shi2011,barthe2013}, this reflects other literature which describes these as discrete Laplace distributions, e.g., \cite{inusah2006}. Thus, it could be expected that a random variable drawn from a continuous Laplace distribution could be transformed (e.g., rounded, floored) such that a symmetric geometrically distributed random variable results, but this is not the case. The symmetric geometric probability mass function at any integer $k$ is
\begin{equation}
\Geo(\alpha) = \left(\frac{\alpha-1}{\alpha+1}\right)\alpha^{-|k|}.
\end{equation}
Then if $\alpha=\exp\left(\frac{\e}{\Delta f}\right)$, $\e$-differential privacy is preserved.

\begin{remark}
The difference between a Laplace mechanism mapped to integers and the geometric mechanism (even though this is from the so-called discrete Laplace distribution) can be observed from the generation of random variables from their respective distributions.

Directly from the specification of the Laplace distribution, it is clear that $Y_{\Lap}$ can be generated from the difference of two i.i.d. exponentially distributed random variables, $\exp\{\cdot\}$ with parameter $\lambda = 1/b =\e / \Delta f$, $Y_{\Lap}\sim\exp_1\{\lambda\}-\exp_2\{\lambda\}$. Thus for integer count perturbation $\e$-differential privacy, the appropriate rounding gives $\lceil\exp_1\{\lambda\}-\exp_2\{\lambda\}\rfloor$ (similarly the floor $\lfloor\exp_1\{\lambda\}-\exp_2\{\lambda\}\rfloor$ and $\lceil\exp_1\{\lambda\}-\exp_2\{\lambda\}\rceil$ could equally be applied that would maintain differential privacy)

The geometric mechanism where $\alpha=\exp(\e/\Delta f)$ is also related to the exponential distribution. However, where $\lambda = \e / \Delta f$, then the geometric mechanism is specified by $Y_{\Geo}\sim\lfloor\exp_1\{\lambda\}\rfloor-\lfloor\exp_2\{\lambda\}\rfloor$ --- hence, it is immediately apparent that the rounded $\e$-private Laplace mechanism $\lfloor Y_{\Lap} \rceil$ is close to, but not the same as the $\e$-private geometric mechanism $Y_{Geom}$.
\end{remark}

%\section{Mixture Distribution Mechanisms}

\subsection{Mixture distribution mechanisms}
%Optimal2,
The application of distribution mechanisms and their prospective optimality led to the derivation of optimal mechanisms for a given privacy budget $\e$. These ``geometric staircase mechanisms'' for both differential privacy and approximate differential privacy are found in ~\cite{Optimal3} and ~\cite{Optimal1} respectively. There a staircase-shaped mechanism is proposed, as a mixture of uniform distributions, to optimize usefulness with respect to $\e$-differential privacy, as well as result for approximate differential privacy. Importantly \emph{for integer value queries, such as for histograms and counting queries, the staircase shaped mechanism reduces to the $\e$-differentially private symmetric geometric mechanism \cite{Optimal3}.}

\subsubsection{Truncated mechanisms}

It has recently been reported in \cite{Liu2016} that truncated Laplace mechanisms (as a special case of what are termed generalized Gaussian distributions) can preserve $\e$-differential privacy. But this is dependent on query release from two neighboring databases $\X^{(1)}$ and $\X^{(2)}$ sharing the same bounds. But in a more general (and practical) case, a truncated Laplace mechanism (or from any other suitable distribution) will \emph{not preserve differential privacy} or even approximate differential privacy. This is evident from the following example:

Consider queries from  $\X^{(1)}$ and $\X^{(2)}$ (and $\X^{(2)}$ has one more user) with $\ell_1$ sensitivity $\Delta f=1$, where we draw noise from a Laplace distribution with scale factor $b$, and we only perturb for query release by $\lceil Y \rfloor$, $Y \sim \Lap(b) $. Now the noise is only added if in some bound $|Y| \leq c$ for the two neighboring databases. Then consider the potential case $\lceil|Y^{(2)}|\rfloor=c$, and query output $\psi$ from $X^{(2)}$ with noise $c$ added, which has non-zero probability, then following from (\ref{ploss}) consider the privacy loss,
\begin{align}
\label{trunc_loss}
\Ll^\psi_{\M(X^{(2)})\|\M(X^{(1)})}=\ln \left(\frac{x_2[\textrm{where}\,x_2>0]}{x_1[\textrm{where}\,x_1=0]}\right) = \infty.
\end{align}
and the privacy loss $\L$ is unbounded, when $\M(X^{(1)}) \neq \M(X^{(2)})$, and there is no differential privacy.

For many cases of querying such as integer count querying, any truncated mechanism will have unbounded privacy loss, even though providing better utility to the standard unbounded Laplace mechanism. In the limit of privacy parameters in the piecewise mixture distributions we will introduce in the following sections, the piecewise mixture mechanism defaults to a standard Laplace, or geometric mechanism, either with less or more differential privacy, or a truncated Laplace or geometric mechanism.

\section{Differentially Private Mixture Mechanisms}

\subsection{Laplace Piecewise Mixture Mechanism}

We use the concept of the truncated Laplace mechanism to form what we call \emph{a Laplace piecewise mixture mechanism}, where there are effectively two scale parameters $b_1$ and $b_2$, where the scale parameter $b_2$ is applied around the origin  from $-c_t$ to $c_t$, and beyond a suitable point $c_t$ not far from the origin, the smaller scale parameter $b_1$ is applied -- effectively fusing two distributions. As we will show, the key aspect of the piecewise mixture mechanisms proposed is that the privacy loss, given any typical query, will vary between a lower privacy loss (with greater probability) and a higher privacy loss (with lower probability), and a greatly improved general privacy budget over standard mechanisms with similar accuracy performance for linear querying. Differential privacy is preserved and the dataset curator is provided with more \emph{flexibility} in mechanism design.
%this preserves differential privacy, and allows the designer more flexibility in trading off privacy requirements with usefulness.

We set two privacy parameters $\e_r=r\e$ (beyond break-point $\pm c_t$) and $\e$ (within break-point $\pm c_t$) where $r > 0$ and typically $\e \leq 1$. A Laplace piecewise mixture distribution is denoted as $Y \sim \Lm\left(b_2=\frac{\Delta f}{\e},b_1=\frac{\Delta f}{r\e}\right)$, where set $\Delta f = 1$, with $\pm c_t$, at which we effectively fix and combine the two distributions $ Y \sim \Lap(1/\e)) $, $Y \sim \Lap(1/(r\e))$.

Note that as $r \rightarrow \infty$ then the relevant scale parameter $b_1 \rightarrow 0$ and the Laplace mixture mechanism tends toward a truncated Laplace mechanism. Note also that $r < \infty$ is required for the privacy loss $\L$ to remain bounded following from the $\L$ privacy loss description in the previous subsection.

\begin{define}
Hence the PDF for $Y \sim \Lm\left(\frac{1}{\e},\frac{1}{r\e}\right)$, $b_1=\frac{1}{r\e}$, $b_2=\frac{1}{\e}$,  can be formally given for all $x$ in $\R$ as
\begin{align}
\Lm(x | b_2, b_1) = \left\{\begin{array}{l l}\frac{a_1}{(2b_1)}\exp\left(\frac{-|x|}{b_1}\right),&|x|>c_t\\
\frac{a_2}{(2b_2)}\exp\left(\frac{-|x|}{b_2}\right),&|x|\leq c_t\end{array}\right. \label{PDFmix}
\end{align}
where $a_1$ and $a_2$ are constants related to the break-point $c_t$ and the scale parameters $b_1$ and $b_2$, designed to keep the cumulative distribution function (CDF) $\leq 1$, and for the PDF to be continuously defined,
\begin{align}
a_1 &\textstyle = \frac{p_1}{\left(p_1 \exp(-c_t/b_1) + p_2 (1- \exp(-c_t/b_2))\right)}\quad \textrm{and}\nonumber\\
a_2 &\textstyle = \frac{p_2}{\left(p_1 \exp(-c_t/b_1) + p_2 (1- \exp(-c_t/b_2))\right)},\quad \textrm{where}\\
p_1 &\textstyle = \frac{\exp(-c_t/b_2)}{b_2\left((1/(2b_1))\exp(-c_t/b_1)+(1/(2b_2))\exp(-c_t/b_2)\right)}\quad \textrm{and}\nonumber\\
p_2 &\textstyle = \frac{\exp(-c_t/b_1)}{b_1\left((1/(2b_1))\exp(-c_t/b_1)+(1/(2b_2))\exp(-c_t/b_2)\right)}.\nonumber
\end{align}
\end{define}
%\vspace{-10pt}
The CDF (which follows from the process to derive the standard zero-mean Laplace CDF, where we seek to keep the continuity of the CDF, specifically in the choice of $k_C$ below, such that there are no step changes) can then be expressed in closed form  as:
\begin{align}
{\Lm}_C(x | b_2, b_1)= \begin{cases}\frac{a_1}{2}\exp\left(\frac{x}{b_1}\right),&\scriptstyle x<-c_t \\
\frac{a_2}{2}\exp\left(\frac{x}{b_2}\right)+k_C,&\scriptstyle -c_t \leq x \leq 0\\
1-\left(\frac{a_2}{2}\exp\left(\frac{-x}{b_2}\right)+k_C\right),&\scriptstyle 0 < x \leq c_t \\
1-\frac{a_1}{2}\exp\left(\frac{-x}{b_1}\right),&\scriptstyle x>c_t\end{cases}\label{CDFmix}\\
\,\textrm{where}\, k_C=a_1/2\exp\left(-c_t/b_1\right)-a_2/2\exp\left(-c_t/b_2)\right).\nonumber
\end{align}
%\vspace{-10pt}
Key parameters for the Laplace piecewise mixture mechanisms, is (i). first absolute moment (expectation of noise); (ii). second moment (variance) and (iii). entropy. These can all be derived in closed form as follows:
(i). The expectation of the noise $E(|x|)$ can be derived as
\begin{align}
E_{\Lm}(|x|)&=\int_{-\infty}^{\infty} |x| \Lm(x) dx\nonumber\\
&= a_2\left(b_2-\exp\left(-c_t/b_2\right)(b_2+c_t)\right)\label{lapmnoise}\\
&+a_1\exp\left(-c_t/b_1\right)(b_1+c_t).\nonumber
\end{align}
%\vspace{-5pt}
(ii). The variance of the Laplace-piecewise distribution $\sigma^2$ can then be derived as
\begin{align}
\sigma^2_{\Lm}(x) &= \int_{-\infty}^{\infty} x^2 \Lm(x) dx\nonumber\\
&= 2a_2\left(b_2^2-\exp\left(-c_t/b_2\right)\left(b_2^2+b_2 c_t+c_t^2/2\right)\right)\label{varlapm}\\
&+2a_1\exp\left(-c_t/b_1\right)\left(b_1^2+b_1 c_t+c_t^2/2\right).\nonumber
\end{align}
%\vspace{-5pt}
(iii). The entropy $H_{\Lm}(x)$ can be derived as
\begin{align}
H&_{\Lm}(x) = -\int_{-\infty}^{\infty} \Lm(x)\ln(\Lm(x))dx\nonumber\\
&= \ln(2b_2/a_2)\left(1-a_1\exp\left(\frac{-c_t}{b_1}\right)\right)\nonumber\\ &\scriptstyle+\ln(2b_1/a_1)\left(a_1\exp\left(\frac{-c_t}{b_1}\right)\right)+\left(a_1/b_1\right)\exp\left(-\frac{c_t}{b_1}\right)(b_1+c_t)\label{lapment}\\
&-\left(a_2/b_2\right)\exp\left(-\frac{c_t}{b_2}\right)(b_2+c_t)+a_2.\nonumber
\end{align}
%log(2*b2/a2)*(1-a1*exp(-tr/b1))+log(2*b1/a1)*a1*exp(-tr/b1)+a1/b1*exp(-tr/b1)*(b1+tr)-a2/b2*exp(-tr/b2)*(b2+tr)+a2
%\vspace{-5pt}
An example Laplace mixture PDF according to the definition of (\ref{PDFmix}), with two sets of scale parameters $\{b_2=10,b_1=1\}$ and $\{b_1=10,b_2=1\}$, and the two standard Laplace distributions for $b_{o,1}=1,\,b_{o,2}=10$ (i.e., $\e=1$ , $\e=1/10$) and $b_{o,3}=3.36$ (i.e., $\e=0.2975$) respectively are shown in  Fig. \ref{figLAPpdf}. Please note that the preferred implementation is for $b_2>b_1$, which is the $\Lm(b_2=10=1/\e,b_1=1/(r\e))$ dashed line in Fig. \ref{figLAPpdf}. For $\Lm(b_2=10,b_1=1)$ the effective fusing of distributions is apparent. Example Laplace mixture CDFs according to the expression in (\ref{CDFmix}) for the same mixture mechanisms and standard Laplace mechanisms are provided in Fig. \ref{figLAPcdf}. Here, for the preferred $\Lm(b_2=10=1/\e,b_1=1/(r\e))$, it is clear that the CDF rapidly tends to 0 and 1 beyond (below and above) the break-points of $-c_t$ and $c_t$ respectively.
\begin{figure}[tb]
\centering
\includegraphics[width=0.8\columnwidth]{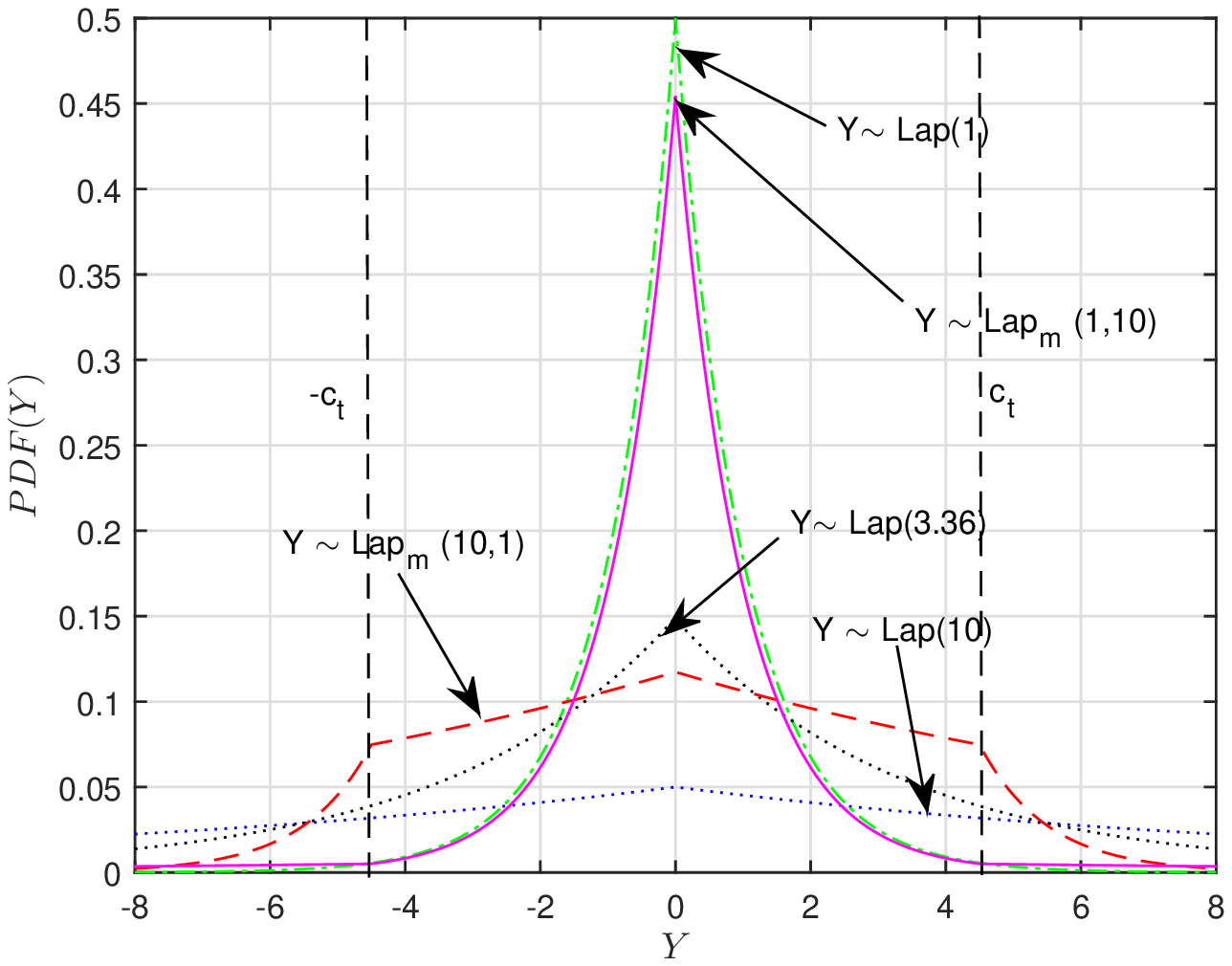}
\caption{Probability Density Function ($PDF(Y)$) for standard Laplace mechanisms with $\e=1/10$, $\e=1$, $\e=0.2975$ (respective to $b=1/\e=10$, $b=1$ and $b=3.36$)  and two Laplace mixture mechanisms $Y \sim \Lm(1/\e=10,1/(r\e)=1)$ and $Y \sim \Lm(1/\e=1,1/(r\e)=1/10)$ with breakpoint $c_t=4.5$ shown}
\label{figLAPpdf}
\end{figure}

%\begin{figure}[tb]
%\centering
%\includegraphics[width=0.8\columnwidth]{hat1_mat_figs/LapPDF3}
%\caption{Probability Density Function ($PDF(Y)$) log y-axis for Standard Laplace mechanisms with $\e=1/10$, $\e=1$ and two Laplace mixture mechanisms $Y \sim \Lm(1,10)$ and $Y \sim \Lm(10,1)$, where $r = 10$ and $r=1/10$ respectively, with breakpoint $c_t=4.5$ shown}
%\label{figLAPpdfl}
%\end{figure}

\begin{figure}[tb]
\centering
\includegraphics[width=0.8\columnwidth]{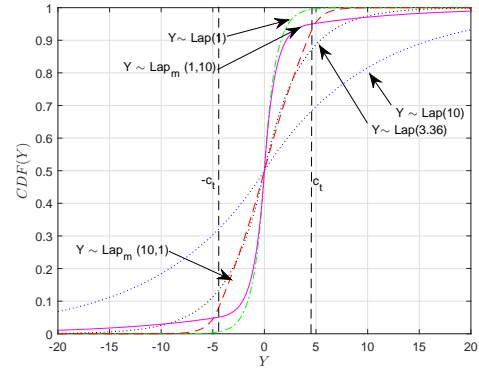}
\caption{Cumulative Distribution Function ($CDF(Y)$) for standard Laplace mechanisms with $\e=1/10$, $\e=1$, $\e=0.2975$, $b=1/\e$, and two Laplace mixture mechanisms $Y \sim \Lm(1/\e=10,1/(r\e)=1)$ and $Y \sim \Lm(1/\e=1,1/(r\e)=1/10)$ with breakpoint $c_t=4.5$ shown}
\label{figLAPcdf}
\end{figure}

Thus following from the closed form CDF we can simply generate $\Lm(b_2,b_1)$ distributed random variables $Y_{\Lm}$ according to Algorithm \ref{alg:mix}.

\begin{algorithm}[!t]
\small
\caption{Generating a Laplace-mixture random variable $Y \sim \Lm(b_2,b_1)$} \label{alg:mix}
\begin{algorithmic}[1]
\STATE Draw variable $r_u$ randomly from a uniform distribution in the range $[0,1]$, then
\IF{$r_u < (a_1/2)\exp(-c_t/b_1)$}
\STATE{$Y = b_1\ln(2r_u/a_1)$,}
\ELSIF{$r_u > 1-(a_1/2)\exp(-c_t/b_1)$}
\STATE{$Y = -b_1\ln(2(1-r_u)/a_1),$}
\ELSIF{$r_u \leq 1/2$,}
\STATE{$Y = b_2\ln\left(2/a_2(r_u-(a_1/2\exp(-c_t/b_1)-a_2/2\exp(-c_t/b_2)))\right)$}
\ELSE
\STATE{\footnotesize $Y = b_2\ln\left(2/a_2(1-r_u-(a_1/2\exp(-c_t/b_1)-a_2/2\exp(-c_t/b_2)))\right)$}
\ENDIF
\end{algorithmic}
\small
\end{algorithm}

For ease of nomenclature, in the remainder of the paper we refer to the Laplace mixture mechanism as $\Lm(\e,r\e)$ rather than $\Lm(1/\e,1/(r\e))$. Note that these nomenclatures are equivalent.

\subsubsection{Privacy Characteristics of Laplace Mixture Mechanism}
\begin{theorem}
The Laplace piecewise mixture mechanism is $\max\{\e,\e_r=r\e\}$ differentially private.
\end{theorem}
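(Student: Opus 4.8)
The plan is to reduce the statement to a single one-dimensional Lipschitz estimate on the log-density of the injected noise. Recall that for an additive-noise mechanism $\M(\X)=f(\X)+Y$ with $Y$ drawn from a strictly positive density $p$ and a query with $\ell_1$-sensitivity $\Delta f$, the output density at $\xi$ is $p(\xi-f(\X))$; writing $g=\ln p$, the privacy loss of $\xi$ between neighbors $\X^{(1)}\sim\X^{(2)}$ is $g(\xi-f(\X^{(1)}))-g(\xi-f(\X^{(2)}))$, which for a scalar query is bounded by $|g(\xi-f(\X^{(1)}))-g(\xi-f(\X^{(2)}))|$ and for a histogram query — where the cells are disjoint, the noise coordinates are i.i.d., and the output density factorizes — by $\sum_j|g(\xi_j-f_j(\X^{(1)}))-g(\xi_j-f_j(\X^{(2)}))|$. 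Hence if $g$ is $L$-Lipschitz on $\R$, both quantities are at most $L\,\|f(\X^{(1)})-f(\X^{(2)})\|_1\le L\,\Delta f$; integrating the resulting pointwise inequality $p(\cdot-f(\X^{(1)}))\le e^{L\Delta f}\,p(\cdot-f(\X^{(2)}))$ over an arbitrary measurable $\Sm$ then yields the inequality of Definition 1 with parameter $L\,\Delta f$. Since $\Delta f=1$ for counting and histogram queries, it therefore suffices to prove that $g=\ln\Lm$ is Lipschitz on $\R$ with constant $\max\{\e,r\e\}$.

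To obtain that bound I would argue directly from (\ref{PDFmix}). First, $\Lm(x)>0$ for every $x\in\R$, so $g$ is everywhere defined and, like $\Lm$, even. On $[-c_t,c_t]$ we have $g(x)=\ln(a_2/(2b_2))-|x|/b_2$, and on $\{|x|>c_t\}$ we have $g(x)=\ln(a_1/(2b_1))-|x|/b_1$, so $g$ is piecewise affine on the four intervals $(-\infty,-c_t]$, $[-c_t,0]$, $[0,c_t]$, $[c_t,\infty)$ with slopes in $\{\pm 1/b_2,\ \pm 1/b_1\}=\{\pm\e,\ \pm r\e\}$; in particular $|g'|\le\max\{\e,r\e\}$ wherever $g$ is differentiable. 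The constants $a_1,a_2$ are chosen precisely so that $\Lm$ is continuous at $\pm c_t$ (the paper's requirement that the PDF ``be continuously defined''), while $g$ is trivially continuous at $0$; being continuous and piecewise $C^1$, $g$ is absolutely continuous, so $|g(u)-g(v)|=\left|\int_v^u g'(s)\,ds\right|\le\max\{\e,r\e\}\,|u-v|$ for all $u,v\in\R$. Note that $\max\{\e,r\e\}$ is in fact the exact global Lipschitz constant of $g$ (the steeper of the inner slope $\e$ and the outer slope $r\e$), which is why it is also the exact privacy parameter obtained.

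Combining the two steps completes the argument. I expect the only genuinely delicate points to be the three kinks of $g$, at $0$ and at $\pm c_t$: one must confirm the continuity of $\Lm$ at $\pm c_t$ from the explicit expressions for $a_1,a_2,p_1,p_2$ (a short but necessary computation), and one must be certain that the density-ratio bound is not violated when $\xi-f(\X^{(1)})$ and $\xi-f(\X^{(2)})$ lie on opposite sides of a break-point. The global absolute-continuity/Lipschitz route used above dispatches both automatically, avoiding the region-by-region case split that a more naive direct density-ratio argument would require. Finally, the estimate uses $r<\infty$ (so that $b_1=1/(r\e)>0$ and $g$ is finite and Lipschitz); as the paper notes just after the definition, letting $r\to\infty$ degenerates the mechanism to the truncated Laplace mechanism, whose privacy loss is unbounded, which is exactly the limiting failure of this estimate.
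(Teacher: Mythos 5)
Your proof is correct, and it takes a genuinely different route from the paper's. The paper argues by a three-way case split on where the noise lands relative to the break-point: both realisations within $\pm c_t$ (loss $\e$), both beyond (loss $\e_r$), or straddling the break-point, where it simply asserts an intermediate loss $\e_\kappa$ with $\min\{\e,\e_r\}<\e_\kappa<\max\{\e,\e_r\}$ without computing it. You instead establish a single global estimate: $g=\ln\Lm$ is continuous (because $a_1,a_2$ are chosen so the two pieces of the PDF agree at $\pm c_t$ --- one can check from the definitions that both one-sided limits equal $\tfrac{\exp(-c_t/b_1)\exp(-c_t/b_2)}{2b_1b_2\,S\,D}$ for the common normalising factors $S,D$) and piecewise affine with slopes in $\{\pm\e,\pm r\e\}$, hence globally $\max\{\e,r\e\}$-Lipschitz, and the differential-privacy bound follows by integrating $g'$ and factorising over i.i.d. coordinates for histograms. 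What your approach buys is rigour exactly where the paper is weakest: the straddling case is dispatched automatically by absolute continuity (indeed your argument shows $\e_\kappa=\e(c_t-u)+r\e(u+1-c_t)$ for a crossing at $u$, confirming the paper's unproved claim), and the kinks at $0$ and $\pm c_t$ need no separate treatment. What the paper's case split buys, and your argument does not directly expose, is the probabilistic refinement used immediately afterwards in its Remark --- that the loss equals $\e$ with probability $1-a_1\exp(-c_t/b_1)$ and only rarely approaches $r\e$ --- which is the basis for the general privacy budget $\zeta_\e$ later on. Your closing observation that the bound degenerates as $r\to\infty$ matches the paper's discussion of truncated mechanisms.
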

\begin{proof}
This can be immediately derived from the two pieces of the PDF with absolute value of noise $|Y|$ less-than-or-equal, or greater than the break-point $c_t$. Within $\pm 1$ of the break-point $c_t$ the privacy loss tends from $\e$ to $\e_r$ for neighboring databases $\X^{(1)}$ and $\X^{(2)}$. If $r>1$ then the piecewise mixture mechanism is $\e_r$ differentially private. Hence for true count $n_i$ from $f(\X^{(1)})$ or $f(\X^{(2)})$:

If $|Y^{(1)}|+n_i \leq n_i+c_t \wedge |Y^{(2)}|+n_i\leq n_i+c_t$
\begin{align}
{}^1\L_{\M(X^{(1)})\|\M(X^{(2)})}&=\ln\left( \exp\left(\frac{\e\|f(\X^{(1)})-f(\X^{(2)})\|_1}{\Delta f}\right)\right)\nonumber\\
&=\e,\nonumber
\end{align}
Else, if $|Y^{(1)}|+n_i > n_i+c_t \wedge |Y^{(2)}|+n_i> n_i+c_t$,
\begin{align}
{}^2\L_{\M(X^{(1)})\|\M(X^{(2)})}&=\ln\left( \exp\left(\frac{\e_r\|f(\X^{(1)})-f(\X^{(2)})\|_1}{\Delta f}\right)\right)\nonumber\\
&=\e_r,\nonumber
\end{align}
Else,
\begin{align}
{}^3\L_{\M(X^{(1)})\|\M(X^{(2)})}&=\ln\left( \exp\left(\frac{\e_\kappa\|f(\X^{(1)})-f(\X^{(2)})\|_1}{\Delta f}\right)\right)\nonumber\\
&=\e_\kappa.\nonumber
\end{align}
where $\min\{\e,\e_r\}<\e_\kappa<\max\{\e,\e_r\},$
\end{proof}

\begin{remark}
The privacy loss $\L(\M)={}^1\L$ from the CDF with probability $1-a_1\exp(-c_t/b_1)$ is bounded by $\e$ where typically $\e \ll \e_r$ and $ a_1\exp(-c_t/b_1) \ll 1$.
\end{remark}

Furthermore the Laplace mixture mechanism has the following property of accuracy: For query release an attribute that can take $k$ potential values, or alternatively considering $k$ i.i.d. random variables ${\Lm}_{,k}$ added to the true query data $f(\X)$,
\begin{theorem}
$\Lm(\e,\e_r)$ is $\ln\left(\frac{k a_1}{\delta}\right)\left(\frac{\Delta f}{\e_r}\right)$ useful when $|{\Lm}(\e,\e_r)| > c_t$.
\end{theorem}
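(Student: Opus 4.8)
The plan is to follow the standard ``failure probability'' template for additive-noise accuracy, specialised to the tail piece of the piecewise mixture. Adopt the usual convention that $\Lm(\e,\e_r)$ is $\alpha$-useful (at confidence $1-\delta$) if, when $k$ i.i.d.\ noise variables $Y_1,\dots,Y_k \sim \Lm(\e,\e_r)$ are added to the true answers $f(\X)$, one has $\Pr\!\left[\max_{1\le i\le k}|Y_i|\ge\alpha\right]\le\delta$. Thus it suffices to exhibit a threshold $\alpha>c_t$ for which a union bound over the $k$ coordinates yields total failure probability at most $\delta$.

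First I would read the upper tail of a single noise variable straight off the closed-form CDF in (\ref{CDFmix}): for $t>c_t$, $\Pr[Y>t]=\tfrac{a_1}{2}\exp(-t/b_1)$, and by the evenness of the PDF (\ref{PDFmix}), $\Pr[Y<-t]=\tfrac{a_1}{2}\exp(-t/b_1)$, so $\Pr[|Y|>t]=a_1\exp(-t/b_1)$ with $b_1=\Delta f/\e_r$. This is the only place the mixture shape enters: outside $\pm c_t$ the mechanism is a renormalised Laplace of scale $b_1$, so its tail is a pure exponential inflated by the constant $a_1$.

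Next I would take the union bound $\Pr[\max_i|Y_i|>t]\le\sum_{i=1}^k\Pr[|Y_i|>t]=k\,a_1\exp(-t/b_1)$, valid for $t>c_t$, set the right-hand side equal to $\delta$, and solve: $\exp(-t/b_1)=\delta/(k a_1)$, hence $t=b_1\ln\!\big(\tfrac{k a_1}{\delta}\big)=\ln\!\big(\tfrac{k a_1}{\delta}\big)\tfrac{\Delta f}{\e_r}$. Then with probability at least $1-\delta$ every coordinate perturbation is at most $\ln(k a_1/\delta)(\Delta f/\e_r)$, which is precisely the asserted usefulness bound.

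The only genuine subtlety --- and what the theorem's qualifier ``when $|\Lm(\e,\e_r)|>c_t$'' is really encoding --- is that the exponential tail formula is valid only for $t>c_t$, so the derived $\alpha$ is a legitimate bound exactly in the regime $\ln(k a_1/\delta)(\Delta f/\e_r)>c_t$, i.e.\ for $\delta$ small enough; I would state this threshold on $\delta$ explicitly. I would also note that for larger $\delta$ the relevant event lands in the inner piece, where the scale is $b_2=\Delta f/\e$ and an analogous (looser) bound holds, and I would record the factor $a_1$ as the quantitative accuracy overhead of the mixture relative to a pure Laplace mechanism of scale $\Delta f/\e_r$, consistent with the paper's theme that the mixture buys much better typical-case noise at the price of a mild constant in the tail.
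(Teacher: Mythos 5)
Your proposal is correct and follows essentially the same route as the paper's own proof: a union bound over the $k$ coordinates combined with the outer-piece tail $\Prob[|Y|\geq t]=a_1\exp(-t/b_1)$, which at $t=\ln(k a_1/\delta)\,\Delta f/\e_r$ gives total failure probability $\delta$. Your explicit remark that the bound is only valid in the regime $\ln(k a_1/\delta)(\Delta f/\e_r)>c_t$ makes precise the qualifier the theorem states informally, but the argument itself is the same.
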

Approximately $r/\ln(a_1)$ more useful, i.e., more accurate, than a differentially private Laplace mechanism with privacy budget $\e$, where $\delta$ is some small number close to zero.

\begin{proof}
\begin{align}
&\Prob\left[\|f(\X)-\M_L(\X)\|_{\infty} \geq \ln\left(\frac{k a_1}{\delta}\right)\left(\frac{\Delta f}{\e_r}\right)\right]\nonumber\\
&=\Prob\left[\max_{i \in [k]} |Y_i| \geq \ln\left(\frac{k a_1}{\delta}\right)\left(\frac{\Delta f}{\e_r}\right)\right]\\
&\leq k \Prob\left[ |Y_i| \geq \ln\left(\frac{k a_1}{\delta}\right)\left(\frac{\Delta f}{\e_r}\right)\right]\nonumber\\
&=k\left(\frac{\delta}{k a_1}\right)\exp\left(\ln(a_1)\right),\nonumber\\
&=\delta.\nonumber
\end{align}
\end{proof}

\subsubsection{Accuracy/privacy tradeoff by cost formulation}

One measure follows from ~\cite{Optimal3,ghosh2012,brenner2010,gupte2010}, where the combined utility with respect to particular accuracy-losses can be combined in the following metric for expectation of cost of $x$
\begin{equation}
\mathrm{Cost_{per.\,budget}}=\chi=\int_{-\infty}^\infty\Ll(x)\Prd(x) dx.
\end{equation}
One better quantification of accuracy-loss $\Ll(x)$, is the absolute value of the noise $x$ added hence $\Ll(x)=|x|$, which gives the expectation of the noise amplitude. Then if the probability distribution at $x$, $\Prd(x)$ is specified by the zero-mean Laplace distribution, as per the Laplace mechanism, this integral simply equals to $\frac{\Delta f}{\e} = 1/\e$ for sensitivity one queries. The expression for $\chi$ where  $\Ll(x)=|x|$ for the piecewise mixture Laplace mechanism with respect to $r$,$\e$ and the breakpoint at $c_t$ is given in (\ref{lapmnoise}).

Another quantification of accuracy-loss is the variance of the noise where $\Ll(x)=x^2$, which for the zero-mean Laplace distribution has a value of $2b^2 = 2/\e^2$ for sensitivity one queries. The expression for $\chi$ where $\Ll(x)=x^2$ for the piecewise mixture Laplace mechanism is given in (\ref{varlapm}).

%It is also possible to derive this $\mathrm{Cost_{per-budget}}$ exactly also for the Laplace mixture mechanism $\Lm\left(\frac{1}{r\e},\frac{1}{\e}\right)$ , . [This $\mathrm{Cost_{per-budget}}$ will be given the parameter name $\chi$]  [this helps tune the hat, more to follow].

\vspace{-5pt}

\subsection{Geometric Piecewise Mixture Mechanism}

We now provide the piecewise mixture mechanism formed from fusing two discrete Laplace distributions of different scale parameters $\alpha_1=\exp(r\e)$, $\alpha_2=\exp(\e)$ around a break-point $c_t$, in a similar manner as applied to continuous Laplace distributions.

\begin{define}
For $Y \sim \Gm\left(\alpha_2,\alpha_1\right)$, $\alpha_1=\exp(r\e)$, $\alpha_2=\exp(\e)$, where we set $\Delta f = 1$, the probability mass function (PMF) of the piecewise mixture geometric mechanism can be formally given for all $x$ in $\Z$, $c_t$ in $\Z^{+}$ as
\begin{align}
\Gm(x | \alpha_1, \alpha_2) = \left\{\begin{array}{c l}a_{1,g}\left(\frac{\alpha_1-1}{\alpha_1+1}\right)\alpha_1^{-|x|},&|x|>c_t\\
a_{2,g}\left(\frac{\alpha_2-1}{\alpha_2+1}\right)\alpha_2^{-|x|},&|x|\leq c_t,\end{array}\right.\label{PDFgmix}
\end{align}
where
\begin{align}
 a_{1,g} &\textstyle= \frac{g_1}{g_1 \alpha_1^{-c_t} + g_2 (1 - \alpha_2^{-c_t})}\nonumber\\
a_{2,g} &\textstyle = \frac{g_2}{g_1 \alpha_1^{-c_t} + g_2 (1 - \alpha_2^{-c_t})}\label{DFgmix}\\
 g_1 &\textstyle=2\frac{\Geo(\alpha_2,x=-c_t)}{\Geo(\alpha_2,x=-c_t)+\Geo(\alpha_1,x=-c_t)}\nonumber\\
g_2 &\textstyle= 2\frac{\Geo(\alpha_1,x=-c_t)}{\Geo(\alpha_2,x=-c_t)+\Geo(\alpha_1,x=-c_t)}.\nonumber
\end{align}
\end{define}

An example geometric mixture PMF according to the definition of (\ref{PDFgmix}), with two sets of parameters $\{\alpha_2,\alpha_1\}=\{\exp(1/10),\exp(1)\}$ and $\{\exp(1),\exp(1/10)\}$, and the two standard geometric distributions for $\alpha=\exp(1),\,\alpha=\exp(1/10)$ (i.e., $\e=1$ , $\e=1/10$) and $\alpha=\exp(0.2894)$ respectively are shown in  Fig. \ref{figGEOpdf}. Please note that the preferred implementation is for $\alpha_2 < \alpha_1$, which is for $\Gm(\exp(1/10),\exp(1))$ in Fig. \ref{figGEOpdf}. For $\Gm(\exp(1/10),\exp(1))$ the effective fusing of distributions is apparent.
\begin{figure}[tb]
\centering
\includegraphics[width=0.8\columnwidth]{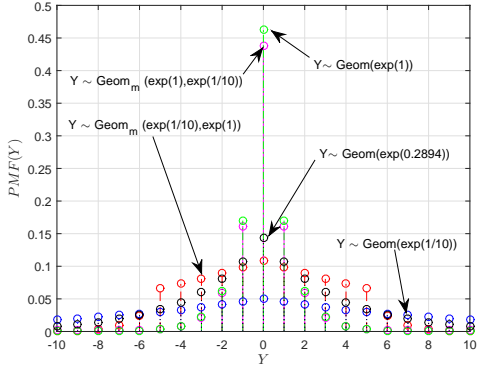}
\caption{Probability Mass Function ($PMF(Y)$) for standard geometric mechanisms with $\e=1/10,\e=1$ and $\e=0.2894$, and geometric mixture mechanisms $Y \sim \Gm(\exp(1/10),\exp(1))$, $Y \sim \Gm(\exp(1),\exp(1/10))$, where $r=10,\,1/10$ respectively, with breakpoint $c_t=5$ shown}
\label{figGEOpdf}
\end{figure}

Then for all $x$ in $\R$, the CDF of the piecewise geometric mixture (derived in a similar manner to the CDF of the piecewise Laplace mixture and the choice of $k_C$ below) is
\begin{align}
{\Gm}_C&(x | \alpha_1, \alpha_2)= \left\{\begin{array}{ll} a_{1,g} \frac{\alpha_1^{-\lceil x\rceil}}{\alpha_1+1},&\scriptstyle x <-c_t\\
a_{2,g} \frac{\alpha_2^{-\lceil x\rceil}}{\alpha_2+1}+k_C,&\scriptstyle -c_t\leq x <0 \\
1-a_{2,g} \frac{\alpha_2^{-\lceil x+1\rceil}}{\alpha_2+1}+k_C,&\scriptstyle 0\leq x \leq c_t \\
1-a_{1,g} \frac{\alpha_1^{-\lceil x+1\rceil|}}{\alpha_1+1},&\scriptstyle x > c_t\end{array}\right.\label{PCDFgmix}\\
&\textrm{where}\, k_C=a_1/2\alpha_1^{-c_t}-a_2/2\alpha_2^{-c_t}\nonumber
\end{align}

(i). The expectation of noise $E(|x|)=\chi$, which we are seeking to minimize with respect to privacy budget, for the geometric piecewise mixture is
\begin{align}
\chi_{\Gm}&=2\sum_{x=1}^{\infty}x\Gm(x)\nonumber\\
&=2a_{1,g}\frac{(c_t\alpha_1-c_t-1)\alpha_1^{c_t+1}}{\alpha_1^2-1}+\nonumber\\
&2a_{2,g}\frac{\left((c_t\alpha_2-c_t-1)\alpha_2^{c_t+1}+\alpha_2\right)}{\alpha_2^2-1}\label{Gmixnoise}
\end{align}

(ii). The variance of the geometric mixture distribution, $\sigma^2=\chi=E(x^2)$, which we are seeking to minimize, is
\begin{align}
\sigma&^2_{\Gm}=2\sum_{x=1}^{\infty}x^2\Gm(x)\nonumber\\
&=2a_{1,g}\frac{\alpha_1^{c_t+1}\left(c_t^2\alpha_1^2-(2c_t^2+2c_t-1)\alpha_1+(c_t+1)^2\right)}{(\alpha_1-1)^2}\label{Gmixvar}\\
&\textstyle+2a_{2,g}\frac{\alpha_2(\alpha_2+1)-\alpha_2^{c_t+1}\left(c_t^2\alpha_2^2-(2c_t^2+2c_t-1)\alpha_2+(c_t+1)^2\right)}{(\alpha_2-1)^2(\alpha_2+1)}.\nonumber
\end{align}

(iii). The entropy of the geometric mixture distribution, $H_{\Gm}(x)$ can be found similarly as
\begin{align}
H&_{\Gm}=-\sum_{x=-\infty}^{\infty}{\Gm(x)\ln(\Gm(x))}\nonumber\\
&\textstyle=-a_{1,g}\ln{a_{1,g}}\alpha_1^{-\lceil c_t\rceil}\ln\left(\frac{\alpha_1-1}{\alpha_1+1}\right)\nonumber\\
&\textstyle+2a_{1,g}\frac{(c_t\alpha_1-c_t-1)\alpha_1^{c_t+1}}{\alpha_1^2-1}\ln(\alpha_1)\nonumber\\
&\textstyle+2a_{2,g}\ln{\alpha_2}\nonumber\\&\textstyle\times{\frac{\alpha_2(\alpha_2+1)-\alpha_2^{c_t+1}\left(c_t^2\alpha_2^2-(2c_t^2+2c_t-1)\alpha_2+(c_t+1)^2\right)}{(\alpha_2-1)^2(\alpha_2+1)}}\label{Gmixent}\\
&\textstyle-a_{2,g}\ln{a_{2,g}} \left(1-\alpha_1^{-\lceil c_t\rceil}\right)\ln\left(\frac{\alpha_2-1}{\alpha_2+1}\right).\nonumber
\end{align}

%sum_(k=0)^\infty k x^k = x/(x - 1)^2 when abs(x)<1
%-(x (1 + x))/(-1 + x)^3 = sum_(k=0)^\infty n^2 x^n for abs(x)<1
%sum_(k=1)^n k^2 x^k = (x (x + 1) - x^(n + 1) (n^2 x^2 - (2 n^2 + 2 n - 1) x + (n + 1)^2))/(1 - x)^3
%sum_(k=1)^n k x^k = ((n x - n - 1) x^(n + 1) + x)/(1 - x)^2

Thus following from the closed form CDF we can simply generate $\Gm(\alpha_2,\alpha_1)$ distributed random variables $Y_{\Gm}$ according to Algorithm \ref{alg:gmix} below.

\begin{algorithm}[!b]
\small
\caption{Generating a geometric-mixture random variable $Y \sim \Gm(\alpha_2,\alpha_1)$} \label{alg:gmix}
\begin{algorithmic}[1]
\STATE $\ai=\alpha_1^{-1},\,\aii=\alpha_2^{-1}$.
\STATE Draw variable $r_u$ randomly from a uniform distribution in the range $[0,1]$, then
\IF{$r_u < a_{1,g}\frac{\ai^{c_t}}{1+\ai}$}
\STATE{$Y = \lceil \ln((1+\ai)r_u/a_{1,g})/(r\e) \rceil$,}
\ELSIF{$r_u > 1-a_{1,g} \frac{\ai^{c_t+1}}{1+\ai}$}
\STATE{$Y = \lceil -\ln((1-r_u)(1+\ai)/a_{1,g})/(r\e)-1\rceil$,}
\ELSIF{$r_u \leq a_{2,g}\frac{1}{1+\aii}+k_c$,}
\STATE{$Y = \lceil \ln((1+\aii)(r_u-k_c)/a_{2,g})/\e \rceil$,}
\ELSE
\STATE{$Y = \lceil -\ln((1-r_u-k_c)(1+\aii)/a_{2,g})/\e-1\rceil$,}
\ENDIF
\end{algorithmic}
\small
\end{algorithm}

For ease of nomenclature, in the remainder of the paper we refer to the geometric mixture mechanism as $\Gm(\e,r\e)$ rather than $\Gm(\exp(\e),\exp(r\e))$, as
we have referred to it in this section. Please note that these nomenclatures are equivalent.

\subsubsection{Privacy Characteristics of Geometric Mixture Mechanism}
%for k = 1:length(a),
%    if a(k) <= adjf1*b2i^tr/(1+b2i);
%        rvar(k) = ceil(-log((1+b2i)*a(k)/adjf1)/log(b2i));
%    elseif a(k) >= 1-adjf1*b2i^(tr+1)/(1+b2i);
%        rvar(k) = ceil(log((1-a(k))/adjf1*(1+b2i))/log(b2i)-1);%floor(log((1-0.8)/adjf1*(1+exp(-1/2)))/log(exp(-1/2))-1)
%    elseif a(k) <= 1/(1+exp(-1/8))*adjf2+kc,
%        rvar(k) = ceil(-log((1+b1i)*(a(k)-kc)/adjf2)/log(b1i));
%        if rvar(k)==0,
%            ks2 = ks2+1;
%        end
%    else,
%        rvar(k) = ceil(log((1-a(k)-kc)/adjf2*(1+b1i))/log(b1i)-1);
%        if rvar(k)==0,
%            ks = ks+1;
%        end
%    end
%end

%\noindent \emph{Proof of Differential Privacy:}\\

\begin{theorem}
The geometric piecewise mixture mechanism is $\max\{\e,\e_r=r\e\}$ differentially private.
\end{theorem}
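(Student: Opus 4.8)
The plan is to mirror the proof of Theorem~1, but in the countable‑output setting of the Remark following Definition~1: it suffices to show that for neighbouring databases $\X^{(1)}\sim\X^{(2)}$ and any integer output $\xi$, the likelihood ratio $\Prob[\M(\X^{(1)})=\xi]/\Prob[\M(\X^{(2)})=\xi]$ is at most $\exp(\max\{\e,r\e\})$. Since $\Delta f=1$ for counting/histogram queries, the two true counts $f(\X^{(1)})$ and $f(\X^{(2)})$ differ by at most one, and the mechanism adds $\Gm(\alpha_2,\alpha_1)$‑noise; hence this ratio is either $1$ (when the query value is unchanged, a trivial case) or equals $\Gm(k)/\Gm(k')$ for two consecutive integers with $|k-k'|=1$. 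So the whole theorem reduces to bounding $\Gm(k)/\Gm(k+1)$ and its reciprocal, uniformly over $k\in\Z$, by $\max\{\alpha_1,\alpha_2\}=\exp(\max\{\e,r\e\})$.

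First I would dispose of the two ``bulk'' regions, which give privacy loss $\e$ and $r\e$ exactly as in the three‑case split of Theorem~1. If $k$ and $k+1$ both satisfy $|\cdot|\le c_t$, both PMF values use the inner branch of (\ref{PDFgmix}), so $\Gm(k)/\Gm(k+1)=\alpha_2^{-|k|+|k+1|}=\alpha_2^{\pm 1}$, which together with its reciprocal is bounded by $\alpha_2=\exp(\e)$. Symmetrically, if $k$ and $k+1$ both satisfy $|\cdot|>c_t$, the ratio is $\alpha_1^{\pm 1}$, bounded by $\alpha_1=\exp(r\e)$. For consecutive integers the exponent $|k+1|-|k|$ is always $+1$ (if $k\ge 0$) or $-1$ (if $k\le -1$), so these two cases are complete on their own terms.

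The crux is the single ``straddling'' step at each break‑point, $k=c_t$, $k+1=c_t+1$, together with its mirror $k=-c_t-1$, $k+1=-c_t$ (using that $\Gm$ depends only on $|x|$). Here $\Gm(c_t)=a_{2,g}\,\Geo(\alpha_2,-c_t)$ uses the inner branch while $\Gm(c_t+1)=a_{1,g}\,\Geo(\alpha_1,-(c_t+1))=a_{1,g}\,\alpha_1^{-1}\Geo(\alpha_1,-c_t)$ uses the outer one. The point is that the normalising constants in (\ref{DFgmix}) were designed precisely so the PMF is continuous across $\pm c_t$: there $a_{2,g}/a_{1,g}=g_2/g_1=\Geo(\alpha_1,-c_t)/\Geo(\alpha_2,-c_t)$, and substituting gives
\[
\frac{\Gm(c_t)}{\Gm(c_t+1)}=\frac{a_{2,g}\,\Geo(\alpha_2,-c_t)}{a_{1,g}\,\alpha_1^{-1}\Geo(\alpha_1,-c_t)}=\alpha_1=\exp(r\e),
\]
with reciprocal $\alpha_1^{-1}\le 1$. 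Thus even the straddling step contributes privacy loss exactly $r\e$ — the discrete analogue of the intermediate $\e_\kappa$ of Theorem~1, except that, because a unit step lands fully in the outer regime, it is not merely bounded by but equal to $r\e$. Checking this algebraic collapse (that the $g_1,g_2$ weighting makes the cross‑branch ratio degenerate to a single factor $\alpha_1$) is the only non‑routine point; everything else is the case bookkeeping above, and I expect it to be the main obstacle mostly in presentation, not in difficulty.

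Combining the three cases, $\Gm(k)/\Gm(k+1)$ and its reciprocal are bounded by $\max\{\alpha_1,\alpha_2\}=\exp(\max\{\e,r\e\})$ for every $k\in\Z$, which yields $\max\{\e,r\e\}$‑differential privacy by the Remark. Finally, for a histogram query the bins are disjoint and $\Delta f=1$ forces exactly one bin to change by $\pm 1$; since i.i.d.\ $\Gm$‑noise is added per bin, the joint likelihood ratio factorises, with every unchanged bin contributing $1$ and the single changed bin contributing the bound just proved, so the guarantee carries over unchanged.
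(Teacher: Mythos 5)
Your proof is correct and follows the same basic structure as the paper's: a case split on whether the noise lands inside the break-point (loss $\e$), outside it (loss $r\e$), or straddles it. The difference is that the paper's proof is a two-line assertion --- ``due to the mechanism generating noise from space of integers, the loss is either $\e$ or $\e_r$'' --- which simply declares the straddling case away, whereas you actually verify it. Your computation that $a_{2,g}/a_{1,g}=\Geo(\alpha_1,-c_t)/\Geo(\alpha_2,-c_t)$ forces $\Gm(c_t)/\Gm(c_t+1)=\alpha_1=\exp(r\e)$ is the real content of the theorem, and it checks out: the normalisation in (\ref{DFgmix}) makes $a_{1,g}\Geo(\alpha_1,-c_t)=a_{2,g}\Geo(\alpha_2,-c_t)$, so the cross-branch ratio collapses to a single factor of $\alpha_1$. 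This both closes the gap the paper leaves open and explains \emph{why} the paper's claim is literally true --- unlike the continuous Laplace case of Theorem~1, where the straddling step yields an intermediate $\e_\kappa$, here a unit step from $c_t$ to $c_t+1$ lands entirely in the outer regime, so the loss is exactly $r\e$ rather than something strictly between $\e$ and $r\e$. Your reduction to consecutive-integer likelihood ratios via $\Delta f=1$ and the per-bin factorisation for histograms are also sound. In short: same route, but yours is the version that would survive refereeing.
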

\begin{proof}
Below the break-point $c_t$ of this mechanism the privacy loss is $\e$, and above $c_t$ the privacy loss is $\e_r$, due to the mechanism generating noise from space of integers, the loss is either $\e$ or $\e_r$.
\end{proof}

\begin{remark}
From the CDF, with probability $1-a_1\alpha_1^{(-c_t)}$, the privacy loss $\L(\M)$ is bounded by $\e$ where typically $\e \ll \e_r$ in the preferred implementation of the mechanism, for $|x|\leq c_t$.
\end{remark}

Furthermore the geometric mixture mechanism has the following property of accuracy: for query release an attribute that can take $k$ potential values, or alternatively considering $k$ i.i.d. random variables ${\Gm}_{,k}$ added to the true query data $f(\X)$,
\begin{theorem}
$\Gm(\e,\e_r)$ is ${\Gm}_{,u} = \ln\left(\frac{k a_1}{\delta}\right)\left(\frac{\Delta f}{\e_r}\right)$ useful, where $
\ln\left(\frac{a_1 k}{\delta}\right)$ is a positive integer (in $\Z^{+}$) multiple of $\e_r$ when $|{\Gm}(\e,\e_r)| \geq c_t$
\end{theorem}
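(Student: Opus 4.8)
The plan is to mirror the proof of Theorem~2, replacing the continuous Laplace tail integral by a geometric-series tail sum and using the stated integrality hypothesis to make the discrete bound collapse to exactly $\delta$.

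First I would make the usefulness convention explicit, following the usage above: writing $\M(\X)=f(\X)+(Y_1,\dots,Y_k)$ with the $Y_i$ i.i.d.\ copies of $\Gm(\e,\e_r)$, the mechanism is $\rho$-useful if $\Prob[\|f(\X)-\M(\X)\|_\infty > \rho]\le\delta$ for the chosen small $\delta$. Then $\Prob[\|f(\X)-\M(\X)\|_\infty > {\Gm}_{,u}]=\Prob[\max_{i\in[k]}|Y_i| > {\Gm}_{,u}]$, and a union bound gives $\Prob[\max_{i\in[k]}|Y_i| > {\Gm}_{,u}]\le k\,\Prob[|Y_1| > {\Gm}_{,u}]$, exactly as in the Laplace case.

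Next I would compute the single-variable tail. The hypothesis $|\Gm(\e,\e_r)|\ge c_t$ restricts attention to the outer branch, i.e.\ ${\Gm}_{,u}\ge c_t$, so every integer $x$ with $|x|>{\Gm}_{,u}$ lies in the $\alpha_1$-piece of the PMF (\ref{PDFgmix}) and $\Prob[|Y_1| > {\Gm}_{,u}]=2a_{1,g}\frac{\alpha_1-1}{\alpha_1+1}\sum_{x={\Gm}_{,u}+1}^{\infty}\alpha_1^{-x}$. Summing the geometric series yields $\Prob[|Y_1| > {\Gm}_{,u}]=\frac{2a_{1,g}}{\alpha_1+1}\,\alpha_1^{-{\Gm}_{,u}}\le a_{1,g}\,\alpha_1^{-{\Gm}_{,u}}$, the last inequality because $\alpha_1=\exp(r\e)\ge 1$ so $2/(\alpha_1+1)\le 1$. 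Now the integrality hypothesis enters: with $\Delta f=1$ we have $\ln\alpha_1=\e_r$ and ${\Gm}_{,u}=\ln(a_1 k/\delta)/\e_r\in\Z^{+}$, hence $\alpha_1^{-{\Gm}_{,u}}=\exp(-\e_r{\Gm}_{,u})=\exp(-\ln(a_1 k/\delta))=\delta/(a_1 k)$. Substituting and identifying $a_1\equiv a_{1,g}$ in the paper's nomenclature gives $k\,\Prob[|Y_1| > {\Gm}_{,u}]\le k\,a_{1,g}\cdot\frac{\delta}{a_1 k}=\delta$, which is the claim; the accuracy comparison against a plain $\Geo(\exp(\e))$ mechanism at budget $\e$ then follows just as in the remark after Theorem~2.

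The main obstacle is the discreteness bookkeeping rather than any deep estimate. Unlike the continuous exponential tail, the geometric tail sum only telescopes to the clean value $\delta/(a_1 k)$ when the threshold ${\Gm}_{,u}$ is itself an integer, which is exactly why the hypothesis demands that $\ln(a_1 k/\delta)$ be a positive integer multiple of $\e_r$. One must also be careful about the strict-versus-weak inequality in the tail event: $\Prob[|Y_1|\ge{\Gm}_{,u}]$ carries an extra factor $2\alpha_1/(\alpha_1+1)\in[1,2)$, whereas $\Prob[|Y_1|>{\Gm}_{,u}]$ carries $2/(\alpha_1+1)\le 1$, so the strict event is the right one to bound in order to land on $\delta$ exactly. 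A secondary point to verify is that ${\Gm}_{,u}>c_t$ genuinely holds under the stated hypothesis (it does once $\delta$ is small, since $\ln(a_1 k/\delta)\to\infty$), so that no mass from the inner $\alpha_2$-branch contributes to the tail.
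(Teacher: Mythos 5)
Your proposal follows essentially the same route as the paper's proof: a union bound over the $k$ coordinates followed by evaluating the outer-branch tail probability at the threshold $\ln(k a_1/\delta)/\e_r$, with the integrality hypothesis making the tail collapse to $\delta/(k a_1)$ times $a_{1,g}$. Your version is in fact somewhat more careful than the paper's, since you actually telescope the geometric series and flag the strict-versus-weak inequality issue (the paper asserts $\Prob[|Y_i|\geq p]=\exp(-p)$ and works with the event $\geq$ throughout, which as you note carries an extra factor $2\alpha_1/(\alpha_1+1)$), but the underlying argument is the same.
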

Approximately $r/\ln(a_1)$ more useful, i.e., more accurate, than a differentially private geometric mechanism with privacy budget $\e$.
\vspace{-15pt}
\begin{proof}
\begin{align}
&\Prob\left[|Y_i| \geq p/\e_r\times\e_r\right] = \frac{1}{\exp(p)}\,\textrm{where}\,p\in \Z^{+}\nonumber\\
&\Prob\left[\|f(\X)-\M_L(\X)\|_{\infty} \geq \ln\left(\frac{k a_1}{\delta}\right)\left(\frac{\Delta f}{\e_r}\right)\right]\nonumber\\
&=\Prob\left[\max_{i \in [k]} |Y_i| \geq \ln\left(\frac{k a_1}{\delta}\right)\left(\frac{\Delta f}{\e_r}\right)\right]\\
&\leq k \Prob\left[ |Y_i| \geq \ln\left(\frac{k a_1}{\delta}\right)\left(\frac{\Delta f}{\e_r}\right)\right]\nonumber\\
&=k\left(\frac{\delta}{k a_1}\right)\exp\left(\ln(a_1)\right),\nonumber\\
&=\delta\,\textrm{where}\,\delta\in\frac{k a_1}{\exp(\e_r p_d)},\, s.t. \{\exp(\e_r p_d) > k a_1\} \cup \{p_d  \in \Z^{+}\}\nonumber
\end{align}
\end{proof}
\vspace{-15pt}

To account for the variation of privacy parameters $\e$ and $r\e$ around the break-point $c_t$, for comparisons between standard and piecewise mixture mechanisms, and to evaluate the general accuracy vs. privacy tradeoffs, we next introduce the concept of a \emph{general privacy budget}, which has a very natural definition, as well as being useful for calculating the real privacy for rounded mechanisms, such as rounding the Laplace mechanism.
\vspace{-14pt}
\subsection{General Privacy Budget}

\begin{define}
Here we define a general privacy budget $\zeta_{\e}$, for neighboring databases $X^{(1)}$,$X^{(2)}$ differing by one coordinate, and where $\M(X^{(2)})\neq M(X^{(1)})$, then
\begin{align}
\zeta_{\e} &= \ln\left\{\sum_{\forall \xi} \exp\left(|\L^\xi_{\M(X^{(2)})\|\M(X^{(1)})}|\right)\Prob[\M(\X^{(1)})=\xi]\right\}\label{zetagen}\\
&\approx \ln\left\{\sum_{\forall \xi} \exp\left(|\L^\xi_{\M(X^{(2)})\|\M(X^{(1)})|}|\right)\Prob[\M(\X^{(2)})=\xi]\right\}\nonumber
\end{align}
Where $|\cdot|$ is absolute value. This clearly equals $\e$ for any geometric or Laplace $\e$-differentially private mechanism, as $|\L^\xi_{\M(X^{(2)})\|\M(X^{(1)})}|=\e$ $\forall \xi \in \Z$.
\end{define}
For queries from the piecewise geometric mechanism, with $\ell_1$-sensitivity $\Delta f = 1$, this is
\begin{align}
\exp\{\zeta_\e\}&=\exp(\e)\Prob[Y_{\Gm}\leq c_t] + \exp(r\e)\Prob[Y_{\Gm} > c_t]\nonumber\\
\zeta_{\e} & = \ln\left(\exp(\e)(1-a_{1,g} \exp(-r\e c_t))\right.\nonumber\\ &\left. +\exp(r\e) a_{1,g} \exp(-r\e c_t)\right)\label{zetageom}\\
&= \ln\left( a_{1,g} (\exp(r\e) - \exp(\e)) + \exp(r\e c_t + \e)\right) - r\e c_t \nonumber\\
&=  \e-r\e c_t + \ln\left( a_{1,g} (\exp((r-1)\e)-1)+\exp(r\e c_t)\right).\nonumber
\end{align}

For standard Laplace mechanism, with $\Delta f = 1$, where noise is rounded to the nearest integer for, e.g., integer count querying and histogram querying, we find $\zeta_{\e}$ as
\begin{align}
\exp&\{\zeta_\e\}=\frac{\Prob[|Y_{\Lap}|< 0.5)]^2}{\Prob[-0.5 \geq Y_{\Lap}< -1.5]}+\Prob[|Y_{\Lap}|< 0.5])\nonumber\\
&+\exp(\e)\left(\Prob[Y_{\Lap}\leq -1.5]+\Prob[Y_{\Lap}\geq 0.5]\right)\nonumber\\
\zeta&_\e=\ln\left\{\frac{(1-\exp(-0.5\e))^2}{0.5\exp(-0.5\e)-0.5\exp(-1.5\e)}\right.\nonumber\\ &\left.+\left(1-\exp(-0.5\e)\right)\right.\label{zetal}\\
&+\left.\exp(\e)(0.5\exp(-1.5\e)+0.5\exp(0.5\e))\right\}.\nonumber
\end{align}

For the Laplace piecewise mixture mechanism, for counting queries with sensitivity $\Delta f = 1$, we find that
\begin{align}
\exp&\{\zeta_\e\}=\frac{\Prob[|Y_{\Lm}|< 0.5)]^2}{\Prob[-0.5 \geq Y_{\Lm}< -1.5]}+\Prob[|Y_{\Lm}|< 0.5])\nonumber\\
&+\exp(\e)\left(\Prob[-c_t\leq Y_{\Lm}\leq -1.5]+\Prob[c_t \geq Y_{\Lm}\geq 0.5]\right)\nonumber\\&+\exp(r\e)\Prob[|Y_{\Lm}|>c_t]\nonumber\\
&\zeta_\e=\ln\left\{\frac{(1-a_2\exp(-0.5\e)-2k_C)^2}{0.5a_2\exp(-0.5\e)-0.5a_2\exp(-1.5\e)}\right.\nonumber\\&\left.+\left(1-a_2\exp(-0.5\e)-2k_C\right)\right.\label{zetalp}\\
&\left.+\exp(\e)a_2\left(0.5\exp(-1.5\e)+0.5\exp(0.5\e)-\exp(-c_t\e)\right)\right.\nonumber\\&\left.+a_1\exp(-r\e (c_t-1))\right\}.\nonumber
\end{align}

%(1-exp(-3.3316*0.5))/(1/2*exp(-3.3316*0.5)-1/2*exp(-3.3316*1.5))
%eps = 3.3316, rounded laplace 8.8978=(1-exp(-3.3316*0.5))/(1/2*exp(-3.3316*0.5)-1/2*exp(-3.3316*1.5))
%log((1-exp(-3.3316*0.5))/(1/2*exp(-3.3316*0.5)-1/2*exp(-3.3316*1.5))*(1-exp(-3.3316*0.5))+8.8978*(1/2*exp(-3.3316*0.5)-1/2*exp(-3.3316*1.5))+exp(3.3316)*(1/2*exp(1.5*-3.3316)+1/2*exp(0.5*-3.3316)))
%%
%(1-adjf2*exp(-1*0.5))/(1/2*adjf2*exp(-1*0.5)-1/2*adjf2*exp(-1*1.5)), eps = 1; exp(-re*c_t)exp(re)adjf1 = 0.2011;
%log(2.0245*(1-adjf2=1.0054*exp(-1*0.5))+2.0245*1.0054*(1/2*exp(-1*0.5)-1/2*exp(-1*1.5))+exp(1)*1.0054*(1/2*exp(1.5*-1)+(1/2*exp(0.5*-1))-exp(-5))+0.2011)

For a true count $f(\X)$, we assume that $f(\X)\gg 1$ in (\ref{zetageom}) and (\ref{zetalp}) (When $f(\X)$ is close to zero (\ref{zetageom}) and (\ref{zetalp}) are approximations). $\zeta_{\e}$ is bounded between $\e$ and $r\e$, and from (\ref{zetageom}) and (\ref{zetalp}) if $r \gg 1$ and $c_t > 2$ then the general privacy budget $\zeta_{\e}$ is greater than $\e$, but closer to $\e$ with less privacy loss. This is the preferred implementation of the piecewise mixture mechanisms. \footnote{If, alternatively, $r \ll 1$, then $\zeta_{\e}$ is less than $\e$ but closer to $\e$ than $r\e$ with a greater privacy loss.}

\begin{proposition}
The general privacy budget applies under composition, if $k$ piecewise mixture mechanisms have a general privacy budget of $\zeta_{\e,i}$ then their combination has a combined general privacy budget of $\sum_{i=1}^k \zeta_{\e,i}$. This even applies when there are different break-point bounds, $c_{t,i}$ for each $i$, as well as when there are separate $\e_i$ and/or $r_i \e_i$.
\end{proposition}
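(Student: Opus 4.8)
The plan is to establish the bound $\zeta_{\e,\mathrm{comb}}\le\sum_{i=1}^{k}\zeta_{\e,i}$ --- which is the natural reading of ``composes'' (mirroring the way $\e$-DP composes to $(\sum_i\e_i)$-DP) --- and then to identify the regime in which it becomes the exact equality stated. First I would rewrite the defining identity (\ref{zetagen}) as an expectation, $\exp(\zeta_{\e})=\mathbb{E}_{\xi\sim\M(\X^{(1)})}\!\big[\exp(|\L^{\xi}_{\M(\X^{(2)})\|\M(\X^{(1)})}|)\big]$, and fix a neighboring pair $\X^{(1)}\sim\X^{(2)}$ with $\M(\X^{(2)})\neq\M(\X^{(1)})$. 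Running the $k$ mechanisms $\M_1,\dots,\M_k$ with independent coin flips, the joint output law factors, $\Prob[(\M_1,\dots,\M_k)(\X^{(j)})=(\xi_1,\dots,\xi_k)]=\prod_i\Prob[\M_i(\X^{(j)})=\xi_i]$, so the composed privacy loss is \emph{additive}: $\L^{(\xi_1,\dots,\xi_k)}=\sum_{i}\L^{\xi_i}_{\M_i(\X^{(2)})\|\M_i(\X^{(1)})}$.

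The second step is to substitute this decomposition into the definition of $\zeta_{\e}$ for the composed mechanism, apply the triangle inequality $|\sum_i\L^{\xi_i}_i|\le\sum_i|\L^{\xi_i}_i|$ so that $\exp(|\sum_i\L^{\xi_i}_i|)\le\prod_i\exp(|\L^{\xi_i}_i|)$, and then use independence to split the $k$-fold sum into a product:
\begin{align}
\exp(\zeta_{\e,\mathrm{comb}})
&=\sum_{\xi_1,\dots,\xi_k}\exp\!\Big(\big|{\textstyle\sum_{i}}\L^{\xi_i}_i\big|\Big)\prod_{i=1}^{k}\Prob[\M_i(\X^{(1)})=\xi_i]\nonumber\\
&\le\prod_{i=1}^{k}\Big(\sum_{\xi_i}\exp\!\big(|\L^{\xi_i}_i|\big)\Prob[\M_i(\X^{(1)})=\xi_i]\Big)=\prod_{i=1}^{k}\exp(\zeta_{\e,i}),
\end{align}
and taking logarithms gives $\zeta_{\e,\mathrm{comb}}\le\sum_{i=1}^{k}\zeta_{\e,i}$. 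I would stress that this derivation uses nothing about the individual parameters of the pieces --- only that each $\M_i$ possesses a well-defined $\zeta_{\e,i}$ against the common neighboring pair --- so it goes through unchanged when the break-points $c_{t,i}$ and the privacy parameters $\e_i$, $r_i\e_i$ differ across mechanisms, which is the final sentence of the Proposition.

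The third step is to recover the exact equality. The triangle inequality above is tight precisely when the component losses $\L^{\xi_i}_i$ all carry a common sign for every output $(\xi_1,\dots,\xi_k)$ in the support. I would argue this holds in the preferred implementation for counting and histogram queries: there $f_i(\X^{(2)})=f_i(\X^{(1)})+\Delta f_i$ for all $i$, so --- exactly as in the ``$\pm$'' argument used above to establish the $\max\{\e,r\e\}$-privacy of the mixture mechanisms --- each $\L^{\xi_i}_i$ equals $\e_i$ or $r_i\e_i$ with a sign dictated solely by which side of the two response distributions $\xi_i$ falls on, and adding one user pushes every coordinate to the same side. Hence $|\sum_i\L^{\xi_i}_i|=\sum_i|\L^{\xi_i}_i|$ throughout the support and $\zeta_{\e,\mathrm{comb}}=\sum_i\zeta_{\e,i}$ with equality.

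I expect the sign bookkeeping of this last step to be the main obstacle: the additive decomposition of the \emph{signed} log-likelihood ratio is automatic from independence, but $\zeta_{\e}$ is defined through its absolute value, so one must either be content with the sub-additive bound or carefully rule out opposite-sign cancellations across coordinates. A secondary technical point, if one wants the statement for fully adaptive composition rather than a fixed batch, is that the plain product in the display must be replaced by an iterated (tower) expectation, each inner conditional general privacy budget being bounded by its supremum over the preceding outputs; this is routine but is where the conditioning argument needs the most care.
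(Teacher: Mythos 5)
Your first two steps are sound and in fact more rigorous than the paper's own argument, which simply asserts $\ln(\prod_{i=1}^k\exp(\zeta_{\e,i}))=\sum_{i=1}^k\zeta_{\e,i}$ by analogy with standard $\e$-composition and never confronts the absolute value inside the definition (\ref{zetagen}). The factorization of the joint output law under independent coin flips, the additivity of the signed losses, and the chain $\exp(|\sum_i\L^{\xi_i}_i|)\le\prod_i\exp(|\L^{\xi_i}_i|)$ followed by the product split give a correct proof of $\zeta_{\e,\mathrm{comb}}\le\sum_{i=1}^k\zeta_{\e,i}$, and that bound indeed holds uniformly over differing $c_{t,i}$, $\e_i$ and $r_i\e_i$. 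Read in the usual differential-privacy sense of ``the composition is $(\sum_i\zeta_{\e,i})$-budgeted,'' this is the substantive content of the Proposition, and you have it.

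The gap is in your third step. The claim that ``adding one user pushes every coordinate to the same side'' confuses the shift of the true counts with the realized outputs. For a counting query with $f_i(\X^{(2)})=f_i(\X^{(1)})+1$, the sign of $\L^{\xi_i}_i$ is determined by whether the realized noisy output $\xi_i$ lands above $f_i(\X^{(2)})$ (loss $+\e_i$ or $+r_i\e_i$) or below $f_i(\X^{(1)})$ (loss $-\e_i$ or $-r_i\e_i$); since the noise variables are independent, zero-mean and symmetric, both events have probability close to $1/2$ in each coordinate, so for $k\ge 2$ mixed-sign output tuples occur with positive probability. On those tuples the triangle inequality is strict, hence $\zeta_{\e,\mathrm{comb}}<\sum_i\zeta_{\e,i}$ strictly, and exact equality cannot be recovered along the route you propose. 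The same issue already defeats equality for $k$ standard geometric mechanisms, where each $|\L^{\xi_i}_i|=\e_i$ exactly but $|\sum_i\pm\e_i|<\sum_i\e_i$ on mixed-sign tuples, so even the paper's remark that the composed budget ``reduces to $\sum_{i=1}^k\e_i$ for standard mechanisms'' only holds as an upper bound under definition (\ref{zetagen}). Either the Proposition must be read as the sub-additive bound your second step delivers, or the absolute value would have to be dropped from the definition of $\zeta_\e$ for exact additivity; your own closing caveat about opposite-sign cancellations identifies precisely the obstruction, and it cannot be ruled out.
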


\begin{proof}
As for standard $\e-$differential privacy, for the combination of $k$ piecewise mixture mechanism, the overall general privacy budget is $\ln(\prod_{i=1}^k \exp(\zeta_{\e,i})) = \sum_{i=1}^k \zeta_{\e,i}$ for $\{\M_1(x),\M_2(x),\ldots,\M_k(x)\}$. This reduces to $\sum_{i=1}^k \e_i$ for standard mechanisms.
\end{proof}

\begin{remark}
\emph{Iterative online querying:} The above proposition for general privacy budget composition indicates the value of the piecewise mixture mechanism for iterative querying solutions, such as for large-scale querying using mechanisms such as online multiplicative weights \cite{hardt2010,hardt2012}, where a queried dataset is continuously updated. For the same general level of privacy, the rate at which the dataset is updated, such as for online multiplicative weights can be reduced according to a particular $\e-$related privacy threshold, as the ``do-nothing" case becomes more frequent, due to the perturbed-noisy query answer occurring more often within a given test-threshold for updates. Or, equivalently, the threshold can be tightened with greater general level of privacy, for the same number of updates as standard online multiplicative weights (or even offline algorithms such as ``dual query" \cite{gaboardi2014}).
\end{remark}

Furthermore with respect to general privacy-budget:
\begin{lemma}
If any mechanism provides $\e$-differential privacy, then $\zeta_{\e}$ exists and is bounded by $\e$. And, conversely, if $\zeta_{\e}$ exists and is bounded, then the corresponding mechanism will provide $\max\left(|\L^\xi_{\M(X^{(2)})\|\M(X^{(1)})}|\right)=\e$-differential privacy.
\end{lemma}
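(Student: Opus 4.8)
The plan is to prove the biconditional by unwinding the definition of $\zeta_{\e}$ in \eqref{zetagen} in both directions. The key observation is that $\zeta_{\e}$ is essentially a log of a probability-weighted average of $\exp(|\L^\xi|)$ over the output space, so monotonicity of $\exp$ and $\ln$, together with the fact that probabilities sum to one, lets me sandwich $\zeta_{\e}$ between the minimum and maximum privacy loss.

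For the forward direction, I would assume the mechanism $\M$ is $\e$-differentially private. By the Remark following Definition 1 (countable-output case) and the definition of privacy loss in \eqref{ploss}, this means $|\L^\xi_{\M(X^{(2)})\|\M(X^{(1)})}| \leq \e$ for every output $\xi$ and every pair of neighboring databases. Then in \eqref{zetagen} each summand satisfies $\exp(|\L^\xi|)\Prob[\M(\X^{(1)})=\xi] \leq \exp(\e)\Prob[\M(\X^{(1)})=\xi]$, and since $\sum_{\xi}\Prob[\M(\X^{(1)})=\xi]=1$, the sum inside the logarithm is at most $\exp(\e)$, hence $\zeta_{\e}$ exists (the series converges, being bounded) and $\zeta_{\e}\leq\e$. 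This direction is essentially routine once the monotonicity argument is set up; I would also note that the series is nonnegative term-by-term so convergence plus boundedness is immediate.

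For the converse, I would assume $\zeta_{\e}$ exists and is bounded, and show the mechanism gives $\max_\xi|\L^\xi_{\M(X^{(2)})\|\M(X^{(1)})}| = \e$-differential privacy — i.e. that the defining $\e$ is exactly this maximum privacy loss. Here the argument runs the other way: since every term in the sum of \eqref{zetagen} is nonnegative, $\exp(\zeta_{\e})\geq \exp(|\L^{\xi^*}|)\Prob[\M(\X^{(1)})=\xi^*]$ for the (or any) output $\xi^*$ achieving the supremum of $|\L^\xi|$; more to the point, pulling out $\max_\xi\exp(|\L^\xi|)$ as a common upper bound on each factor gives $\exp(\zeta_\e)\leq \max_\xi\exp(|\L^\xi|)$, so $\zeta_\e \leq \max_\xi |\L^\xi|$, and conversely a lower-bound estimate shows $\zeta_\e$ is finite exactly when that maximum is finite. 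Boundedness of $\zeta_{\e}$ therefore forces $\max_\xi|\L^\xi| =: \e$ to be finite, and then by the Remark after Definition 1 the inequality $\Prob[\M(\X^{(1)})=\xi]\leq\exp(\e)\Prob[\M(\X^{(2)})=\xi]$ holds for all $\xi$, which is $\e$-differential privacy with this value of $\e$.

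The main obstacle I anticipate is handling the convergence/boundedness bookkeeping cleanly, especially reconciling the two (asymptotically equal) forms of $\zeta_{\e}$ in \eqref{zetagen} — one weighted by $\Prob[\M(\X^{(1)})=\xi]$ and one by $\Prob[\M(\X^{(2)})=\xi]$ — and making precise the claim that $\zeta_{\e}$ being \emph{bounded} is what rules out an unbounded (infinite) privacy loss on some output of nonzero probability, as in the truncated-mechanism counterexample of \eqref{trunc_loss}. A subtlety is that the supremum of $|\L^\xi|$ might not be attained on a countable output space, so I would phrase the converse with $\sup_\xi$ and observe that $\exp(\zeta_\e)\geq \Prob[\M(\X^{(1)})=\xi]\exp(|\L^\xi|)$ for each individual $\xi$ already forces $\sup_\xi|\L^\xi|<\infty$ whenever $\zeta_\e<\infty$ and the mechanism puts positive mass everywhere; the remaining identification of that supremum with the privacy parameter $\e$ is then just Definition 1.
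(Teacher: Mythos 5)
Your proof follows essentially the same route as the paper's: the forward direction bounds each summand of (\ref{zetagen}) by $\exp(\e)\Prob[\M(\X^{(1)})=\xi]$ and uses that the probabilities sum to one, and the converse identifies $\e$ with the maximum absolute privacy loss --- exactly the paper's argument, which is in fact terser and omits the bookkeeping you supply. One caveat on the converse, which you have correctly located as the only delicate point: your proposed fix, that $\exp(\zeta_\e)\ge\Prob[\M(\X^{(1)})=\xi]\exp(|\L^\xi|)$ for each $\xi$ forces $\sup_\xi|\L^\xi|<\infty$ provided the mechanism puts positive mass everywhere, does not quite close the gap on a countably infinite output space. The per-output bound $\exp(|\L^\xi|)\le\exp(\zeta_\e)/\Prob[\M(\X^{(1)})=\xi]$ blows up as the probabilities tend to zero, so one can have $\zeta_\e$ finite while $\sup_\xi|\L^\xi|=\infty$ (e.g.\ $\Prob[\M(\X^{(1)})=\xi_n]\propto 2^{-n}$ with $|\L^{\xi_n}|=\tfrac{n}{2}\ln 2$ gives summands $\propto 2^{-n/2}$). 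Closing this requires an additional hypothesis such as a finite output space, probabilities bounded away from zero on the common support, or attainment of the supremum; the paper's own proof simply asserts the converse without addressing this, so your treatment is strictly more careful than the original even though the residual gap remains.
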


\begin{proof}
This key lemma follows directly from the definition of general privacy budget in (\ref{zetagen}) by either operations on the left-hand side, or right-hand side, of those equations. If $\M(\cdot)$ is $\e$-differentially private then privacy loss, $\L^\xi_{\M(X^{(2)})\|\M(X^{(1)})} \leq\e; \forall \xi$, and since $\Prob[\M(\X^{(1)})=\xi]$ and $\Prob[\M(\X^{(2)})=\xi]$ are only defined in region $[0,1]$, then $\zeta_{\e}$ must exist and be bounded by $\e$. As for the converse case, if it is known that the general privacy budget exists and is bounded, then it must be that that where $\e=\max\left(|\L^\xi_{\M(X^{(2)})\|\M(X^{(1)})}|\right) \; \forall \xi$, the relevant mechanism $\M(\cdot)$ is $\e$-differentially private.
\end{proof}

\section{Performance Evaluation}

%\subsection{Metrics for Loss, Entropy with respect to General Privacy Budgets}
\subsection{Analytical Utility and Privacy Evaluation}

Here we provide values of metrics for general privacy budget $\zeta_\e$ according to (\ref{zetageom}), (\ref{zetalp}) for geometric and Laplace-mixture piecewise mechanisms, given the three parameters of break-point $c_t$, $\e$ parameter value within break-point and $r\e$ parameter value above breakpoint. We also provide equivalent privacy budget $\e$ for the Laplace mechanism such that $\zeta_\e$ for the rounded Laplace mechanism according to (\ref{zetal}) is equal to that of (\ref{zetalp}). The equivalent privacy budget $\e$ for the standard geometric mechanism is simply equal to (\ref{zetageom}). We use metrics for accuracy-loss $\chi = E(|x|), \chi = \sigma^2(x)$ and entropy $H(x)$ for Laplace as well as for standard geometric mechanisms, and Laplace mixture mechanism according to  (\ref{lapmnoise}),(\ref{varlapm}) and (\ref{lapment}) respectively; and for geometric mixture mechanism according to (\ref{Gmixnoise}), (\ref{Gmixvar}) and (\ref{Gmixent}). This is summarized over a large range of these relevant parameters, with $\Delta f=1$ in Table I in the Appendix.

From Table I, in the Appendix, it is clear that significant benefits in terms of reduced loss for the mixture mechanisms, with less expected noise and less variance, as well as less entropy, are achieved over a wide range of $\e$ and $r\e$, and across the range $4\leq c_t\leq 7$ considered, with benefits from 0.1 to 0.5 for $\e$ (across the range of $\e$ investigated), and for $r\e<2$ when $r > 1$. For instance, for $c_t=5,\e=0.2,r\e=1$ the general privacy budgets $\zeta_\e$  are approximately equivalent, 0.3 for all mechanisms, and for mixtures compared with standard mechanisms: the expectation of noise $E(|x|)$ is approximately less by 0.5, the variance is a factor of 2 smaller, and the entropy is reduced by around 10\%. In Table I, in the Appendix, it is noteworthy that the greatest relative improvements for the mixture mechanisms are in terms of variance $\chi = \sigma^2(x)$ (in many cases less than half the variance of the standard mechanisms), but the improvements in reduced expected noise $E(|x|)$, and lower entropy $H(x)$, are also significant.

We also plot the two metrics of loss, as well as entropy for geometric mixture $\Gm(\e,r\e)$ and a breakpoint $c_t=5$, with a range of two mixture parameters $r\e$ ($r \geq 1$), $\e$ and provide the difference from the standard geometric mechanism with $\e_{\Geo}=\zeta_\e\{\Gm\}$. Positive values for this difference indicate superior performance of the geometric mixture mechanism. The first is expectation of noise $\chi=E(|x|)$, according to (\ref{Gmixnoise}), and we plot this as well as the difference in expectation of noise  $\chi_d=E(|x|)_{\Geo}-E(|x|)_{\Gm}$ in Fig. \ref{fig:enoise1}. We also present the log of variance $\log\chi=\log\sigma^2(x)$, $\sigma^2(x)$ from (\ref{Gmixvar}) and plot this in Fig. \ref{fig:evar1}, along with the difference in log of variance from the standard geometric mechanism with $l\chi_d=\log\sigma^2(x)_{\Geo}-\log\sigma^2(x)_{\Gm}$. Finally we present the entropy $H(x)$ and difference in entropy from the standard geometric mechanism $H_d(x)=H(x)_{\Geo}-H(x)_{\Gm}$ respectively in Fig. \ref{fig:eent1}. In Fig. \ref{fig:enoise1}, Fig. \ref{fig:evar1} and Fig. \ref{fig:eent1} the area of benefit in terms of each respective metric is provided where, as $\e$ decreases, even with increasing $r\e$ the geometric mixture outperforms the standard geometric mechanism. These figures also show benefit for $\e$ up to 1, with the maximum beneficial values of $r\e$, $r>1$, decreasing as $\e$ increases.

\begin{figure}[t!]
\centering\vspace{-4mm}
\includegraphics[width=0.92\columnwidth]{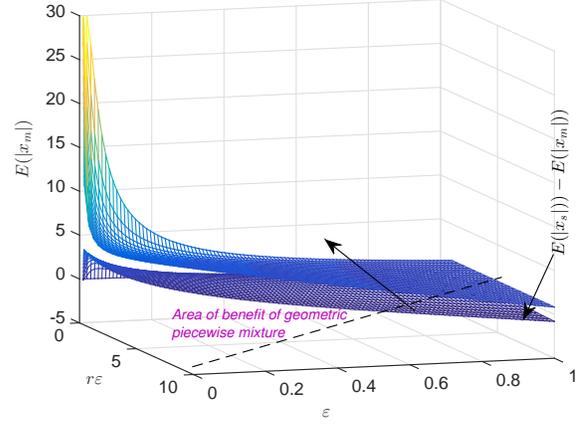}
\caption{Expectation of noise for geometric mixture, $E(|x|)$ versus two mixture parameters $r\e$,$\e$ and difference from standard geometric mechanism with $\e=\zeta_\e\{\Gm\}$, $c_t=5$}
%\caption{CDF of Accuracy $|Count_{True}-Count_{Noisy}|=|n_i-y_i|$ Geometric and Laplace mixtures, $\e=1/5$, $r\e=1, c_t=5$ and standard Geometric and Laplace mechanisms with $\e=0.328$ and $\e=1/2$}
\label{fig:enoise1}
\end{figure}

\begin{figure}[t!]
\centering
\includegraphics[width=0.92\columnwidth]{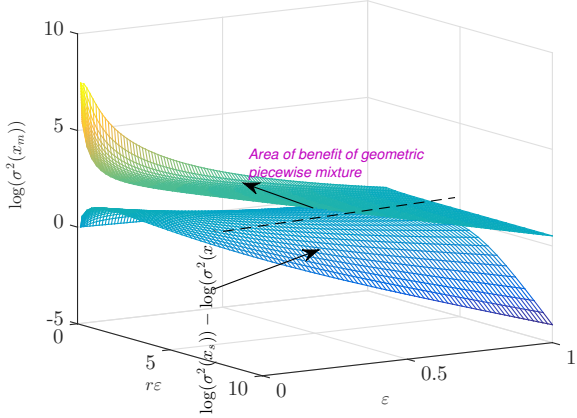}\vspace{-4mm}
%\caption{CDF of Accuracy $|Count_{True}-Count_{Noisy}|=|n_i-y_i|$ Geometric and Laplace mixtures, $\e=1/5$, $r\e=1, c_t=5$ and standard Geometric and Laplace mechanisms with $\e=0.328$ and $\e=1/2$}
\caption{Log of Variance of noise, $\log\sigma^2x$ for geometric mixture versus two mixture parameters $r\e$,$\e$ and difference of log of variance from standard geometric mechanism with $\e=\zeta_\e\{\Gm\}$,$c_t=5$}\vspace{-1mm}
\label{fig:evar1}
\end{figure}

\begin{figure}[t!]
\centering\vspace{-4mm}
\includegraphics[width=0.92\columnwidth]{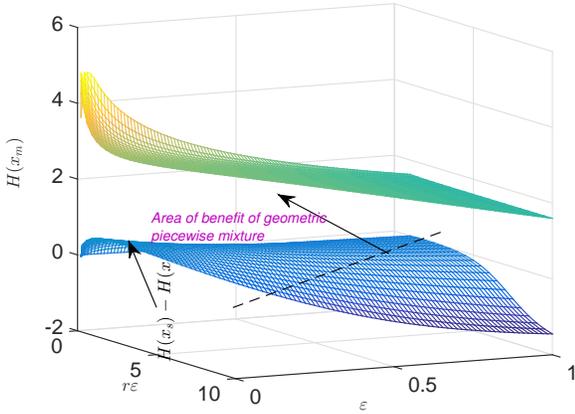}\vspace{-4mm}
%\caption{CDF of Accuracy $|Count_{True}-Count_{Noisy}|=|n_i-y_i|$ Geometric and Laplace mixtures, $\e=1/5$, $r\e=1, c_t=5$ and standard Geometric and Laplace mechanisms with $\e=0.328$ and $\e=1/2$}
\caption{Entropy of noise for geometric mixture, $H(x)_{\Gm}$ versus two mixture parameters $r\e$,$\e$ and difference of entropy from standard geometric mechanism with $\e=\zeta_{\e}\{\Gm\}$, $c_t=5$}\vspace{-4mm}
\label{fig:eent1}
\end{figure}

%Integer Counting queries,Sensitivity one and Histogram queries
\vspace{-3pt}
\subsection{Utility Evaluation by Simulation}
\label{sec:metrics}
Here we test the performance of the proposed piecewise mixture mechanisms and compare with standard Laplace and geometric mechanisms by simulation.
\subsubsection{Utility Metrics for Simulation}
Having already investigated general privacy budgets, and compared with typical privacy budgets, we now seek to provide further suitable measures of utility. The first metric is the \emph{empirical CDF} of the error, $Y_i$ for geometric and Laplace mixtures and standard geometric, which is simply  (using $\leq$ because we are using integer counts) $\Prob(|n_i-y_i| \leq y_t)$, where $y_t \in \{\Z^{+},0\}$

The second metric is mean relative error, when above break-point $c_t$, as a weighted expectation of the added noise being greater than $c_t$ relative to the true count $n_i$ in the dataset
\begin{equation}
\frac{E(|Y_i|\; |\; |Y_i|>c_t)\Prob(|Y_i|>c_t)}{n_i}
\label{eq:mre}
\end{equation}
where $n_i$ is the true count, $|Y_i|$ is $|n_i-y_i|$, $y_i$ being the noisy count, and $\Prob(|Y_i|>c_t)=\#(|Y_i|>c_t)/\#Y_i$, where $\#(\cdot)$ represents the number of elements.

The third metric is $\Prob(|Y_i|\leq c_t)=\#(|Y_i|\leq c_t)/\#Y_i$, the probability that the error is within the breakpoint.

\subsubsection{Simulation Set-up}

We generate neighboring query outputs, $\ve{n_1},\ve{n_2}$ with original counts of $\ve{n}_1=[1,3,10,50,200,1000]$ and neighboring counts $\ve{n}_2=[1,3,10,50,200,1000]+1$ respectively. Hence the $\ell_1$ sensitivity $\Delta f =1$. We generate 120 million noise samples, $\ve{Y}$ $\in \Z$ independently from Laplace mixture (generated as specified in Algorithm 1), standard Laplace, where the noise samples are rounded to the nearest integer, and from geometric and geometric mixture (according to Algorithm 2). Then 10 million noise samples, for each mechanism, are added to each of the six original and six neighboring counts to generate differentially private output. For the cases where $n+Y<0$ we set $Y=-n$ (thus, for instance, for differentially private output zero counts occur very regularly for true counts of 1, and 2). According to Table I, in the Appendix, for the mixture mechanisms we choose two sets of values of break-point, $\e$ and $r\e$, $\{c_t,\e,r\e\}=\{5,1/5,1\}$ and $\{c_t,\e,r\e\}=\{6,1/10,1\}$.  The standard Laplace and geometric mechanisms are simulated such that their $\e=\zeta_{\e}$ of the Laplace mixture and geometric mixture mechanisms. Thus these are set at 0.328 when the mixture mechanism $\{c_t,\e,r\e\}=\{5,1/5,1\}$ and 0.257 respectively when $\{c_t,\e,r\e\}=\{6,1/10,1\}$ for the mixture mechanism. We also compare with more relaxed privacy budgets of 0.5 and 0.45 for standard geometric and Laplace mechanisms.

%Second metric is $\zeta_{\e}$ as defined before, which is calculated according to (\ref{zeta_gen}) for all values of counts $\xi_i$ in neighboring databases, $\X^{(1)}$ and $\X^{(2)}$ with $\Delta_f=1$

%\subsubsection{Simulation Results}

%In the following graphs we apply simulation for counting queries (that also would apply to similar histogram queries). The median privacy loss for the mixture mechanism $\L^\xi$, ($\xi$ in $\Z^{+}$), which corresponds to the actual $\e$ for standard Laplace mechanisms,  is also shown along with the mean privacy loss $E(\L^\xi)$. All actions of mechanism $\M$ on neighboring databases $X^{(1)}$ and $X^{(2)}$ give integer counts.
%
%\begin{equation}
%\L^\xi_{\M(X^{(2)})\|\M(X^{(1)})}=\ln \left(\frac{\Prob[\M(\X^{(2)})]=\xi}{\Prob[\M(\X^{(1)})]=\xi}\right)
%\end{equation}

%Frac_win_boundgeo.eps
%

\subsubsection{Simulation Results}

We show the probability that the absolute error for $\{c_t,\e,r\e\}=\{5,1/5,1\}$ is within chosen break-point, as a bound, of $c_t=5$; for Laplace mixture and standard Laplace in Fig. \ref{fig:Bounds1} , and geometric mixture and standard geometric in Fig. \ref{fig:Bounds2}, with respect to true counts $\in \ve{n_1}$. There is 5\% improvement in error for the mixture mechanisms with small true counts less than 10, with 95\% within chosen $c_t=5$ bound, respect to equivalent privacy budget of standard mechanisms, with equivalent performance to the less private $\e=1/2$ cases for the standard Laplace and geometric mechanisms in Figs. \ref{fig:Bounds1} and \ref{fig:Bounds2} respectively. For true counts of 10 and above there is a 10\% improvement of true counts with respect to the equivalent privacy budget, with 92\% of counts within bound for both mixture mechanisms in Figs. \ref{fig:Bounds1} and \ref{fig:Bounds2}, with equivalent performance to the less private $\e=1/2$ cases.

\begin{figure}[t!]
\centering
\subfloat[Laplace mixture]{\label{fig:Bounds1} \includegraphics[width=0.7\columnwidth]{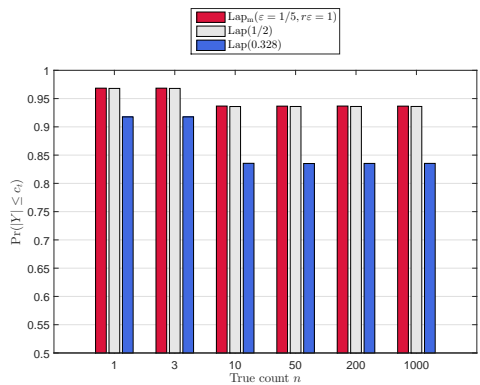}}\\
\subfloat[Geometric mixture]{\label{fig:Bounds2}\includegraphics[width=0.7\columnwidth]{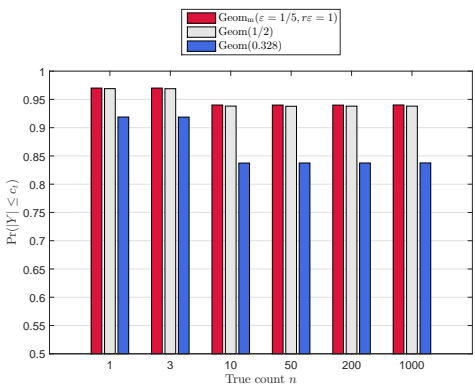}}
\caption{Probability noise is within bounds $c_t$, $\#(|Y_i|\leq c_t)/\#Y_i$ Laplace and geometric mixtures, $\e=1/5$, $r\e=1, c_t=5$ and standard Laplace and geometric mechanisms with $\e=0.328$ and $\e=1/2$}\vspace{-1mm}
\end{figure}

In Fig. \ref{fig:Err1} we plot the CDF of accuracy for $\{c_t,\e,r\e\}=\{5,1/5,1\}$ for the mixture mechanisms, where we note that there is equal performance accuracy of the geometric and Laplace mixture mechanisms, approximating the accuracy of the relaxed privacy budget of standard mechanisms, $\e=1/2$, within the break-point of 5, with $95\%$ of the added noise being within the break-point, as opposed to $85\%$ for the equivalent privacy budget. Above the break-point, the mixture mechanisms approach $100\%$ accuracy, more rapidly than even the relaxed privacy budget, within absolute error bounds of 8, as opposed to 12 for the relaxed privacy budget $\e=1/2$, and 16 for the equivalent privacy budget. In Fig. \ref{fig:Err2} we plot the CDF of accuracy for $\{c_t,\e,r\e\}=\{6,1/10,1\}$ for the mixture mechanisms, close to the 96\% accuracy of the relaxed privacy budget standard mechanisms, $\e=0.45$, within the break-point of 6. Then $94\%$ of the added noise is within the break-point, as opposed to $84\%$ for the equivalent privacy budget. Above the break-point, the mixture mechanisms approach $100\%$ accuracy, more rapidly than even the relaxed privacy budget, within absolute error bounds of 10, as opposed to 11 for the relaxed privacy budget $\e=1/2$, and 21 for equivalent privacy budget.

\begin{figure}[t!]
\centering
\subfloat[Geometric and Laplace mixtures, $\e=1/5$, $r\e=1, c_t=5$; standard geometric and Laplace mechanisms with $\e=0.328$ and $\e=1/2$]{\label{fig:Err1}\includegraphics[width=0.7\columnwidth]{hat1_mat_figs/Utility_CDF_newn.eps}\vspace{-2mm}}\\
\subfloat[Geometric and Laplace mixtures, $\e=1/10$, $r\e=1, c_t=6$; standard geometric and Laplace mechanisms with $\e=0.257$ and $\e=0.45$]{\label{fig:Err2}\includegraphics[width=0.7\columnwidth]{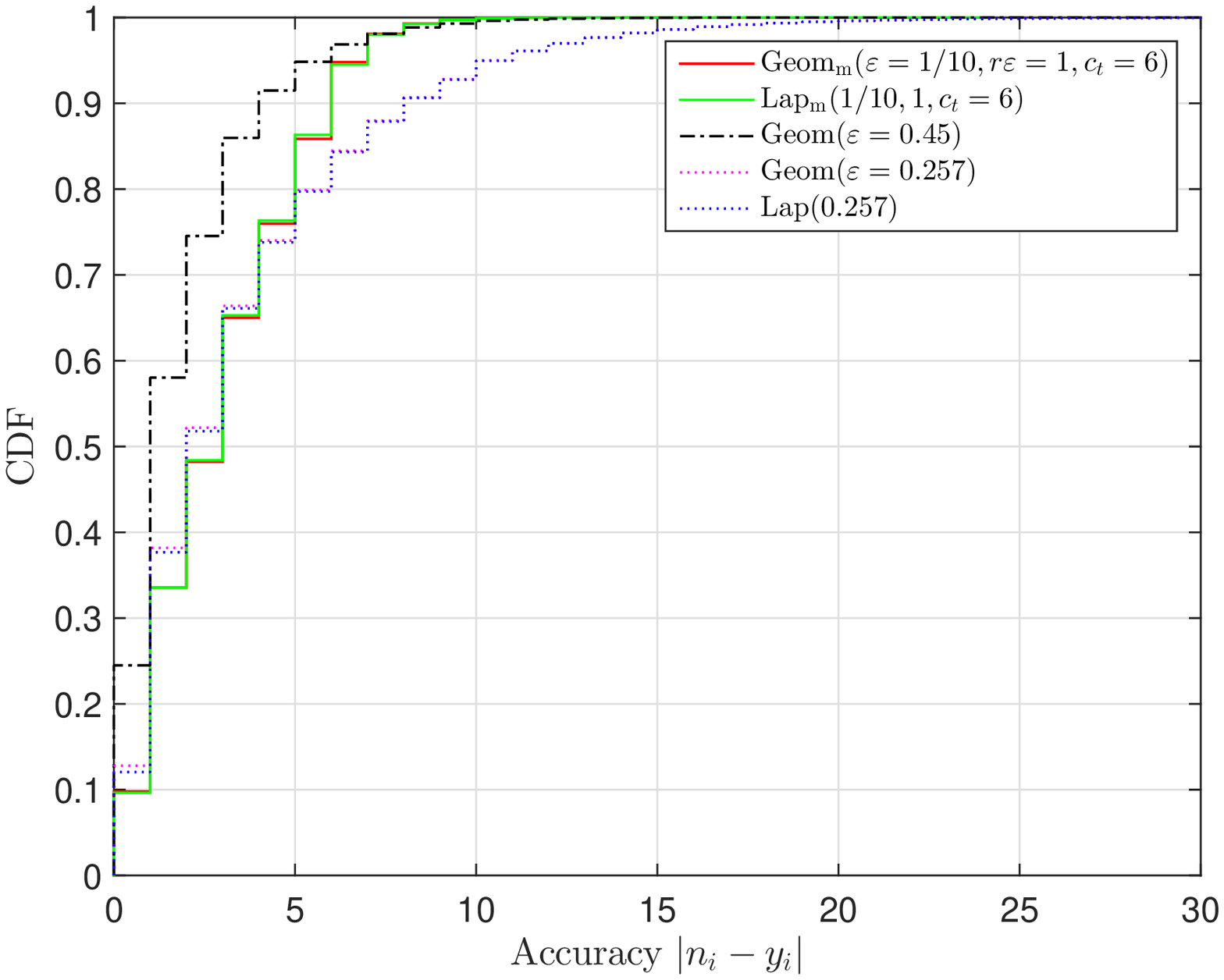}}
\caption{CDF of Accuracy $|Count_{True}-Count_{Noisy}|=|n_i-y_i|$ }\vspace{-4mm}
\end{figure}

In terms of weighted mean relative error for $|Y_i| > c_t$, according to (\ref{eq:mre}), with $\{c_t,\e,r\e\}=\{5,1/5,1\}$, as shown in Fig. \ref{fig:Meanbigerr1}, there is a factor of 3 improvement for the mixture mechanisms with respect to equivalent privacy budget of standard mechanisms, with slightly improved performance, by a factor of 1.05, to the less private $\e=1/2$ cases for the standard geometric mechanisms (which is also equivalent to that for the same privacy budget for standard Laplace mechanism). Furthermore it is clear that there is very small relative error in Fig. \ref{fig:Meanbigerr1} for the mixture mechanisms for any counts larger than 10, error less than 0.01, and small relative error for counts above 1. In terms of mean relative error for $\{c_t,\e,r\e\}=\{6,1/10,1\}$ as shown in Fig. \ref{fig:Meanbigerr2} there is a factor of 4 improvement for the mixture mechanisms with respect to equivalent privacy budget of 0.257 of standard mechanisms. There is slightly deteriorated performance (by a factor of 1.5) to the less private $\e=0.45$ cases for the standard geometric mechanism. Furthermore it is clear that there is very small relative error in Fig. \ref{fig:Meanbigerr1} for the mixture mechanisms for any counts larger than 10, error less than 0.01, and small relative error for counts above 1. For $\{c_t,\e,r\e\}=\{6,1/10,1\}$ as shown in Fig. \ref{fig:Meanbigerr2}, in comparison to $\{c_t,\e,r\e\}=\{5,1/5,1\}$ as shown in Fig. \ref{fig:Meanbigerr1}, there is an increase in relative error, for this improved privacy budget, only by a factor of 1.2 across all true counts $n$, even though the general privacy budget $\zeta_{\e}$ has improved from 0.328 to 0.257.

\begin{figure}[t!]
\centering
\subfloat[Geometric and Laplace mixtures, $\e=1/5$, $r\e=1, c_t=5$; standard geometric and Laplace mechanisms with $\e=0.328$ and $\e=1/2$]
{\label{fig:Meanbigerr1}\includegraphics[width=0.7\columnwidth]{hat1_mat_figs/Mean_rel_bigerrn.eps}\vspace{-2mm}}\\
\subfloat[Geometric and Laplace mixtures, $\e=1/10$, $r\e=1, c_t=6$; standard geometric and Laplace mechanisms with $\e=0.257$ and $\e=0.45$]
{\label{fig:Meanbigerr2}\includegraphics[width=0.7\columnwidth]{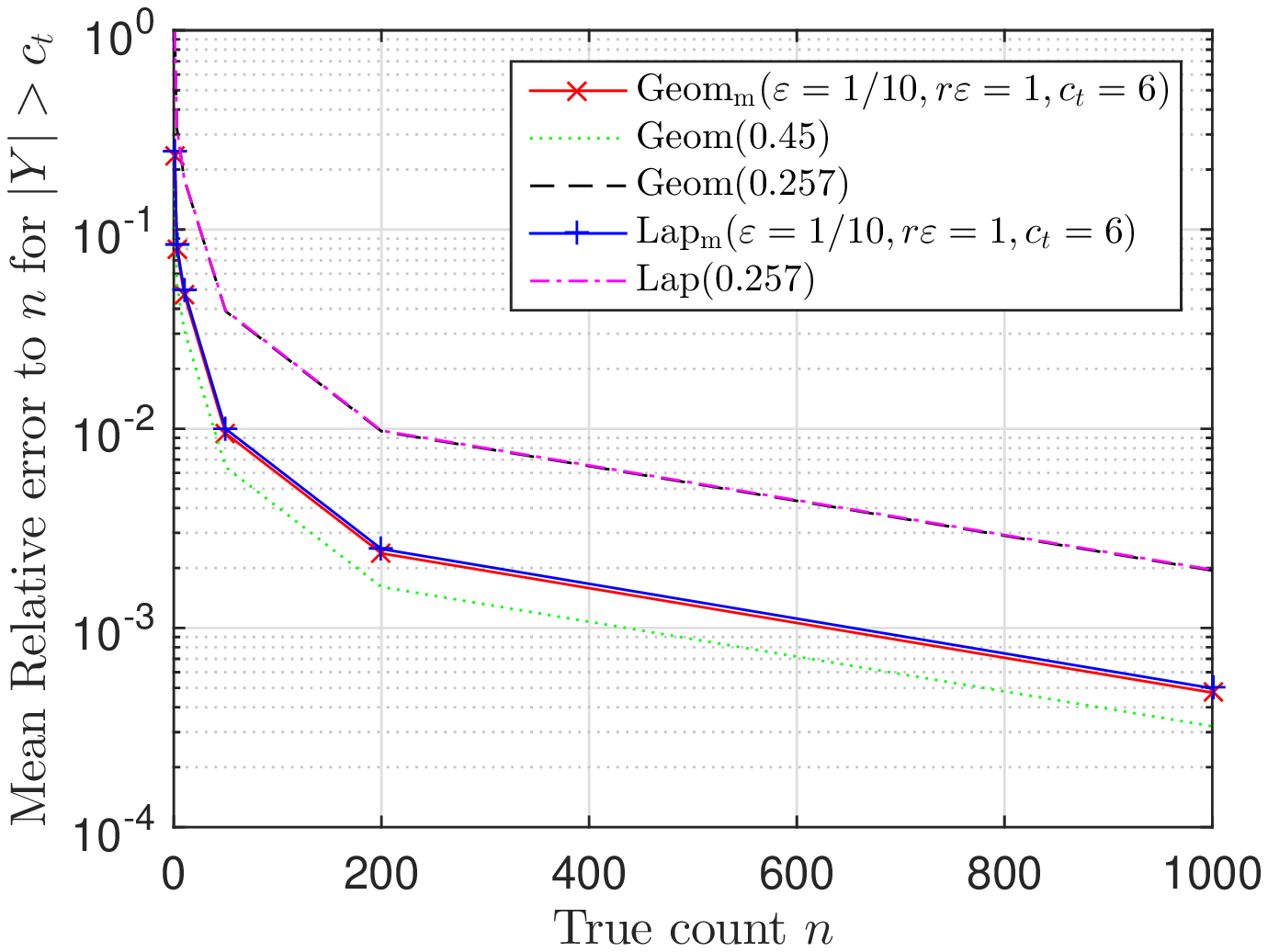}}
\caption{Mean relative error fraction $\frac{E(|Y_i|\;|\;|Y_i|>c_t)\Prob(|Y_i|>c_t)}{n_i}$}\vspace{-2mm}
\end{figure}
%\begin{figure}[t!]
%\centering
%\includegraphics[width=0.735\columnwidth]{hat1_mat_figs/Mean_rel_bigerrn2.eps}
%\caption{Mean relative error fraction $\frac{E(|Y_i|>c_t)\Prob(|Y_i|>c_t)}{n_i}$ geometric and Laplace mixtures, $\e=1/10$, $r\e=1, c_t=6$ and standard geometric and Laplace mechanisms with $\e=0.257$ and $\e=0.45$}\vspace{-4mm}
%\label{fig:Meanbigerr2}
%\end{figure}

%\begin{figure}[t!]
%\centering
%\includegraphics[width=0.4\columnwidth]{hat1_mat_figs/Frac_win_boundlapn.eps}
%\caption{Probability noise is within bounds $c_t$, $\#(|Y_i|\leq c_t)/\#Y_i$ Laplace mixtures, $\e=1/5$, $r\e=1, c_t=5$ and standard Laplace mechanisms with $\e=0.328$ and $\e=1/2$}
%\label{fig:Bounds1}
%\end{figure}
%
%\begin{figure}[t!]
%\centering
%\includegraphics[width=0.4\columnwidth]{hat1_mat_figs/Frac_win_boundgeon.eps}
%\caption{Probability noise is within bounds $c_t$, $\#(|Y_i|\leq c_t)/\#Y_i$ geometric mixtures, $\e=1/5$, $r\e=1, c_t=5$ and standard geometric mechanisms with $\e=0.328$ and $\e=1/2$}
%\label{fig:Bounds2}
%\end{figure}

\vspace{-3pt}
\subsection{Performance Evaluation with a Real-World Dataset}
%song2014,zhu2014,Yuan2015,
We use the Adult dataset from the UCI Machine Learning Repository\footnote{\url{https://archive.ics.uci.edu/ml/datasets/Adult}}, extracted from 1994 US census data, which has been widely used for differential-privacy benchmarking, including recently in, e.g., \cite{zhu2015,boyd2015,gaboardi2014}. This dataset contains 32,651 unit records of Census data with 14 attributes. The dataset manifests 8 categorical attributes and 4 continuous integer attributes. For the evaluation here, we focus only on the perturbation of count queries with a random combination of two attribute values. Therefore, we first study the distribution of true counts and the characteristics of neighboring databases, i.e., two databases where one database contains records of one additional user, from the Adult dataset.

\begin{figure}[tb]
\centering
\subfloat[Distribution of random query counts]{\label{fig:random_query}\includegraphics[width=0.75\columnwidth]{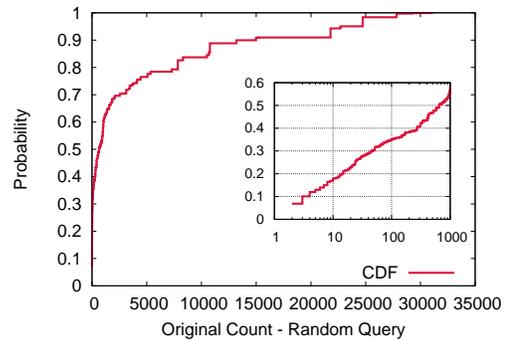}}\\\vspace{-2mm}
%\subfloat[Neighbouring databases]{\label{fig:ndb_prob}\includegraphics[width=0.3\columnwidth]{hat1_mat_figs/ndb_difference_pdf}}
\subfloat[Impact of each unit record]{\label{fig:ndb_user}\includegraphics[width=0.75\columnwidth]{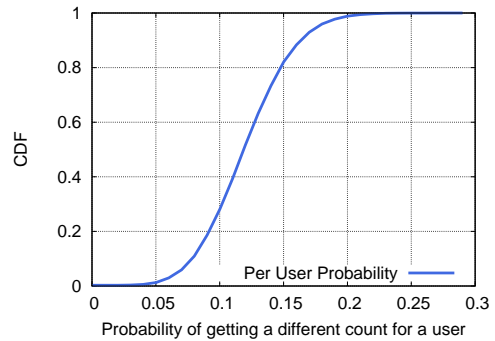}}\\
\caption{Characteristics of Adult dataset}\vspace{-4mm}
\label{fig:adult}
\end{figure}

\subsubsection{Characteristics of the dataset in use}

Fig. \ref{fig:random_query} illustrates the distribution of true counts for 5 million random combinations of two attribute values. Although such true counts exhibit a wide range, approximately 20\% of the queries result in a count less than 10. The probability of the query result being a low count is a significant characteristic of a dataset, as it increases the general privacy loss due to the rounding of negative perturbed counts to zero. However, the majority of the query results are larger values compared to the amount of noise added from the proposed mechanisms. As a result, utility measures for the Adult dataset will be higher due to the smaller error relative to the true count.

\begin{figure*}[t]
\centering
\subfloat[Absolute error]{\label{fig:abs_error}\includegraphics[width=0.7\columnwidth]{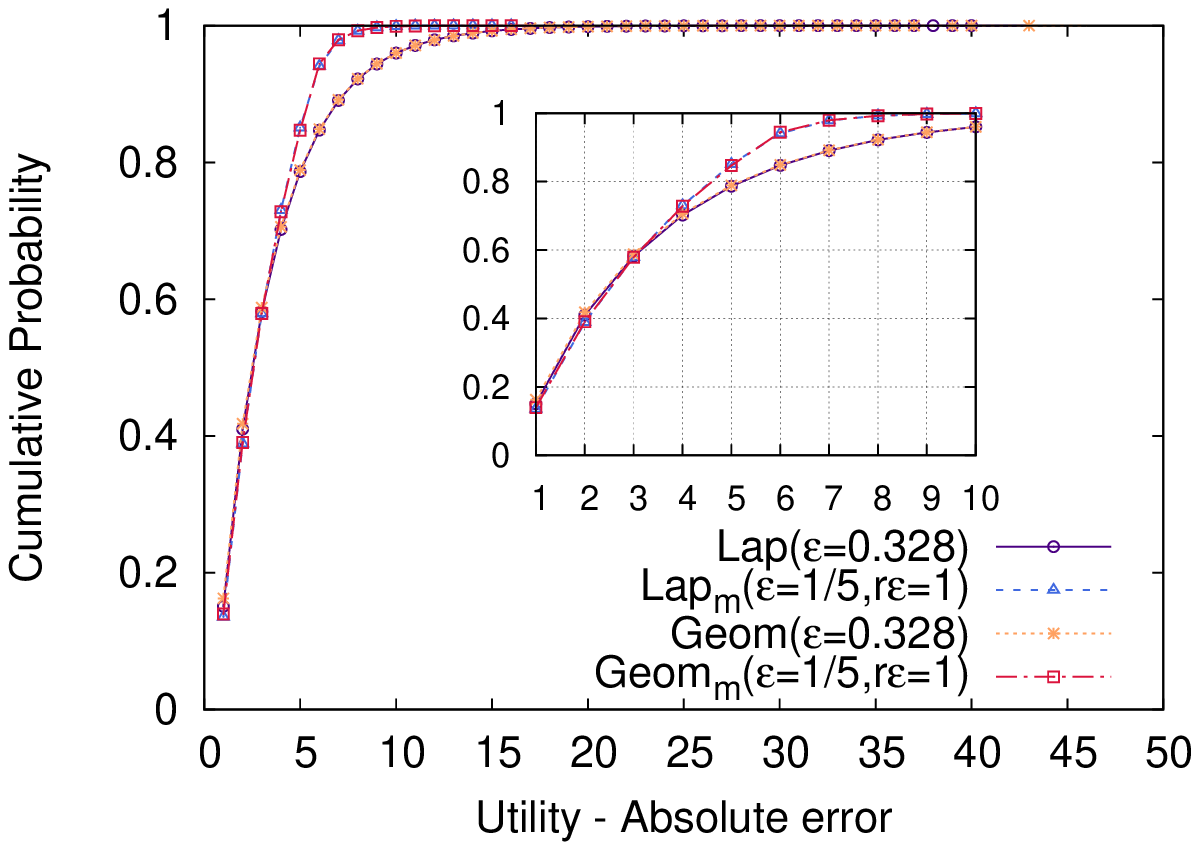}}
\subfloat[CDF of mean relative error]{\label{fig:truncation_error}\includegraphics[width=0.7\columnwidth]{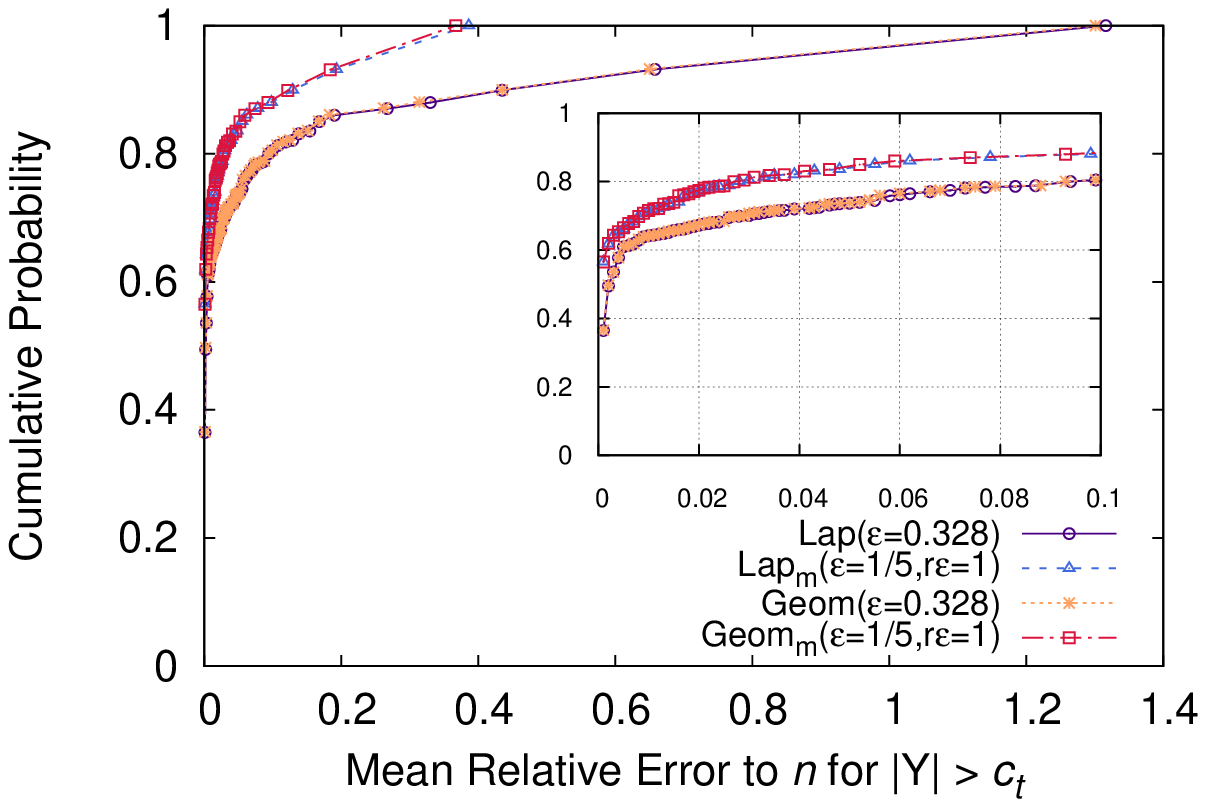}}
\subfloat[Mean relative error vs true count]{\label{fig:error_tc}\includegraphics[width=0.7\columnwidth]{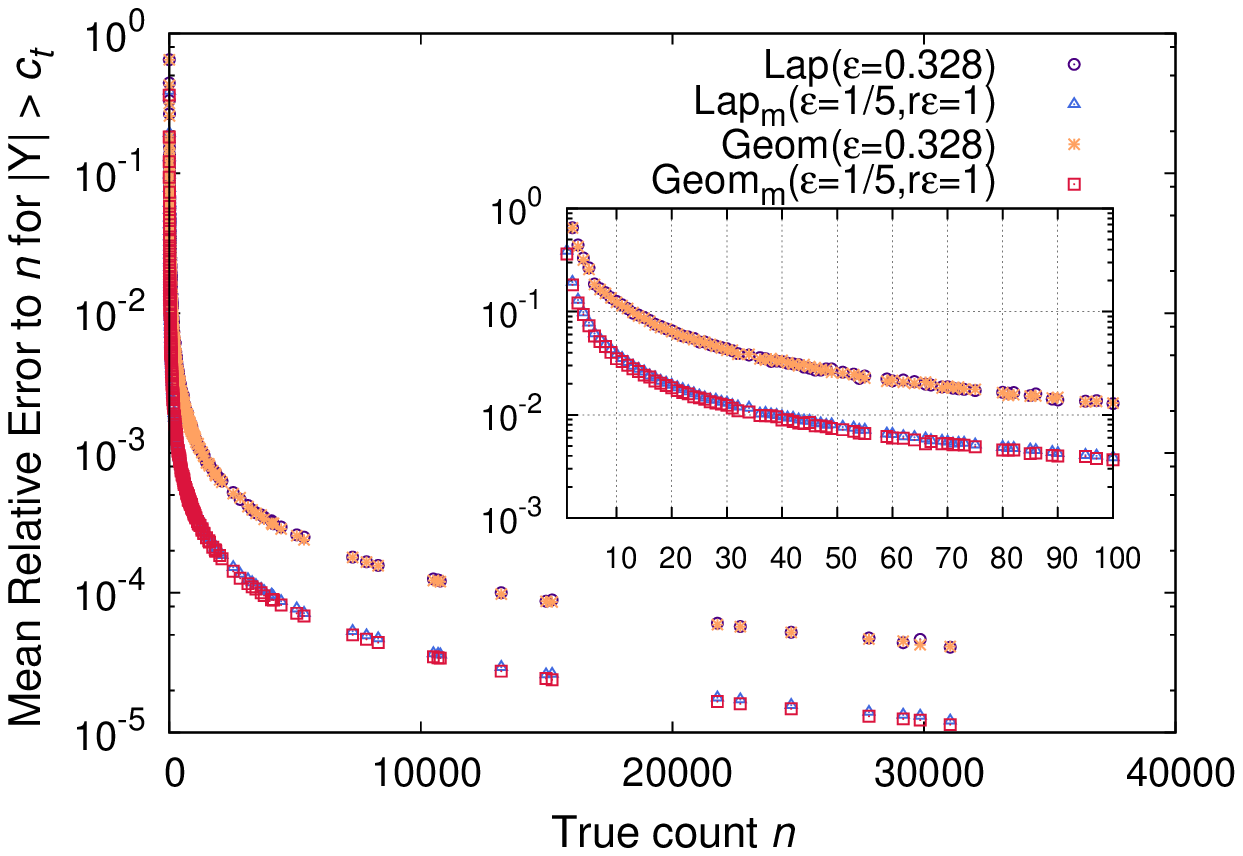}}
\caption{Utility measures for 1 million random queries on Adult dataset. Absolute error $|Count_{True}-Count_{Noisy}|$ and mean relative error fraction $\frac{E(|Y_i|\;|\;|Y_i|>c_t)\Prob(|Y_i|>c_t)}{n_i}$ geometric and Laplace mixtures, $\e=1/5$, $r\e=1, c_t=5$ and standard geometric and Laplace mechanisms with $\e=0.328$}
\label{fig:utility}
\end{figure*}

As we quantify the privacy loss by comparing two neighboring databases, the true difference in neighboring databases is also an important characteristic of a dataset. By removing each user from the dataset, we have created 32,650 different neighboring databases and then performed 100 random count queries from each database. We observe that approximately 85\% of query answers do not change for the two neighboring databases. Moreover, the probability of getting a different count for an individual user in any given two neighboring databases is also very low. Fig. \ref{fig:ndb_user} show that it is almost normally distributed with a mean value of 0.125. Fig. \ref{fig:ndb_user} further emphasizes the fact that probability of getting a different count is less than 0.2. Therefore, whether a person is in the database or not, is not revealed for approximately 80\% of the random queries even without a privacy preserving mechanism.
%These characteristics of the Adult dataset is further exploited in discussing the privacy-utility analysis results in the next section.
%As a result, Adult dataset imposes generally low privacy risk to individuals. On the other hand, the attribute combinations that  provides the %difference pose significant privacy threat if

\subsubsection{Utility-Privacy analysis}

For utility-privacy analysis, we consider the performance analysis metrics, absolute error ($|Count_{True}-Count_{Noisy}|=|n_i -y_i|$) and mean relative error fraction $\frac{E(|Y_i|>c_t)\Prob(|Y_i|>c_t)}{n_i}$ defined in Section \ref{sec:metrics} and we compare the performance of the proposed \emph{piecewise mixture mechanisms} with equivalent standard Laplace and geometric mechanisms. The standard Laplace and geometric mechanisms are parameterized such that $\e = \zeta_\e$ of the mixture mechanisms, which resulted in $\e=0.328$ for standard mechanisms when the mixture mechanism $\{c_t, \e, r\e\} = \{5, 1/5, 1\}$.\\

\noindent{\bf Utility measures:}
The cumulative probability of absolute error ($|Count_{True}-Count_{Noisy}|$) for all four mechanisms are shown in Fig. \ref{fig:abs_error}. It shows that absolute error for \emph{piecewise mixture mechanisms} are less than 10 in almost in every case while the maximum also limited to 15 whereas absolute error for standard mechanisms spreads up to 40. The probability of absolute error being less than 4 is approximately similar for all mechanisms. This validates the design goals of \emph{piecewise mixture mechanisms} to reduce the probability of getting a larger error while slightly increasing the probability of getting a smaller error.

Fig. \ref{fig:truncation_error} further validates these aspects with the metric that is better designed to capture the impact of the break-point, mean relative error fraction $\frac{E(|Y_i|>c_t)\Prob(|Y_i|>c_t)}{n_i}$ (cf. ~\ref{eq:mre}). The lower the mean relative  error, the better the utility. \emph{Piecewise mixture mechanisms} provide lower relative error compared to standard mechanisms which are parameterised to provide the same privacy loss. It also shows that the \emph{geometric  piecewise mixture} provides slightly better utility as expected due to the discrete nature of the geometric mechanism and rounding errors of the Laplace mechanism. For larger $\e$ values, the performance difference between Laplace and geometric mechanisms increases as shown in Table~I, in the Appendix. To compare the experimental results with simulations, we plot the mean relative error against true count in Fig. ~\ref{fig:error_tc}. \emph{Piecewise mixture mechanisms} always result in less error irrespective of the true count. Furthermore, the results show similar patterns and almost similar error values compared to simulations in Fig.~\ref{fig:Meanbigerr2}.\\

\begin{figure}[t]
\centering
\subfloat[Similar neighboring databases]{\label{fig:priv_loss_same}\includegraphics[width=0.72\columnwidth]{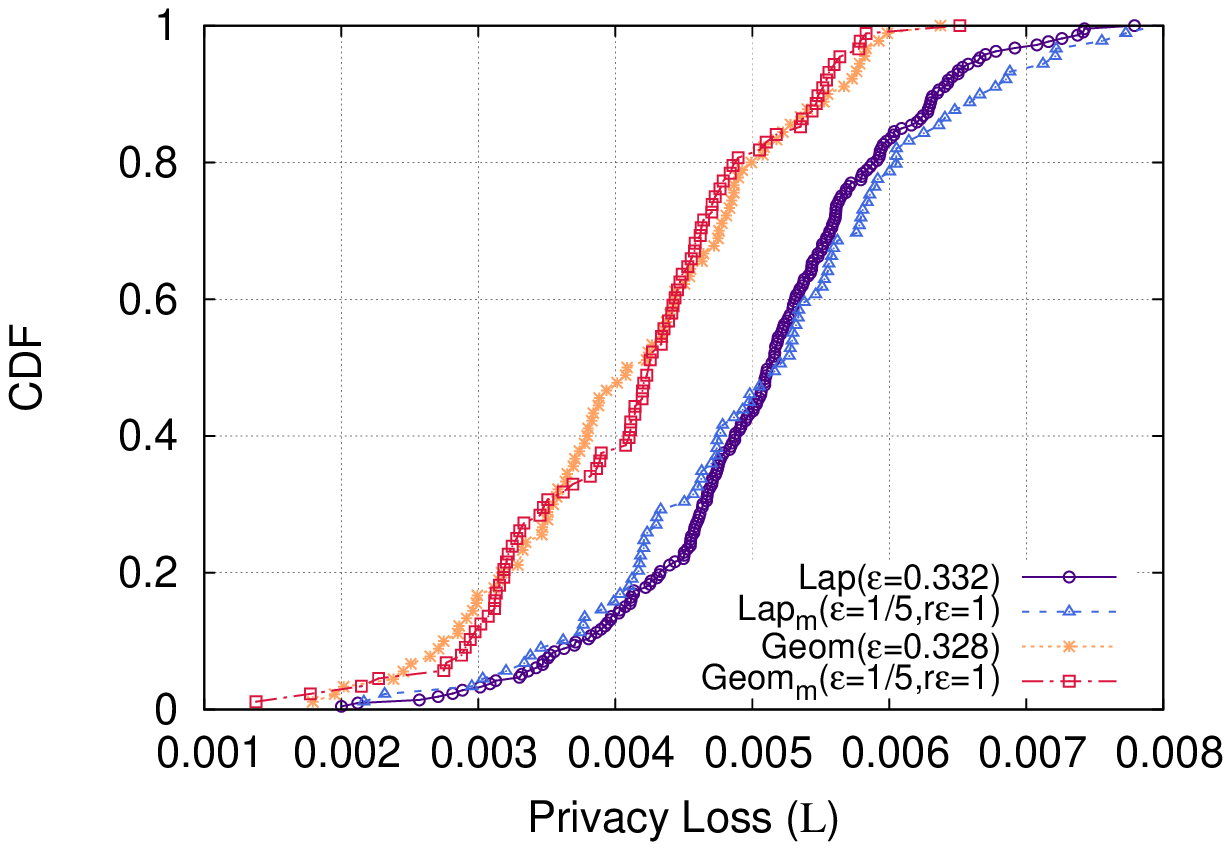}}\\
\subfloat[Different neighboring databases]{\label{fig:priv_loss_diff}\includegraphics[width=0.72\columnwidth]{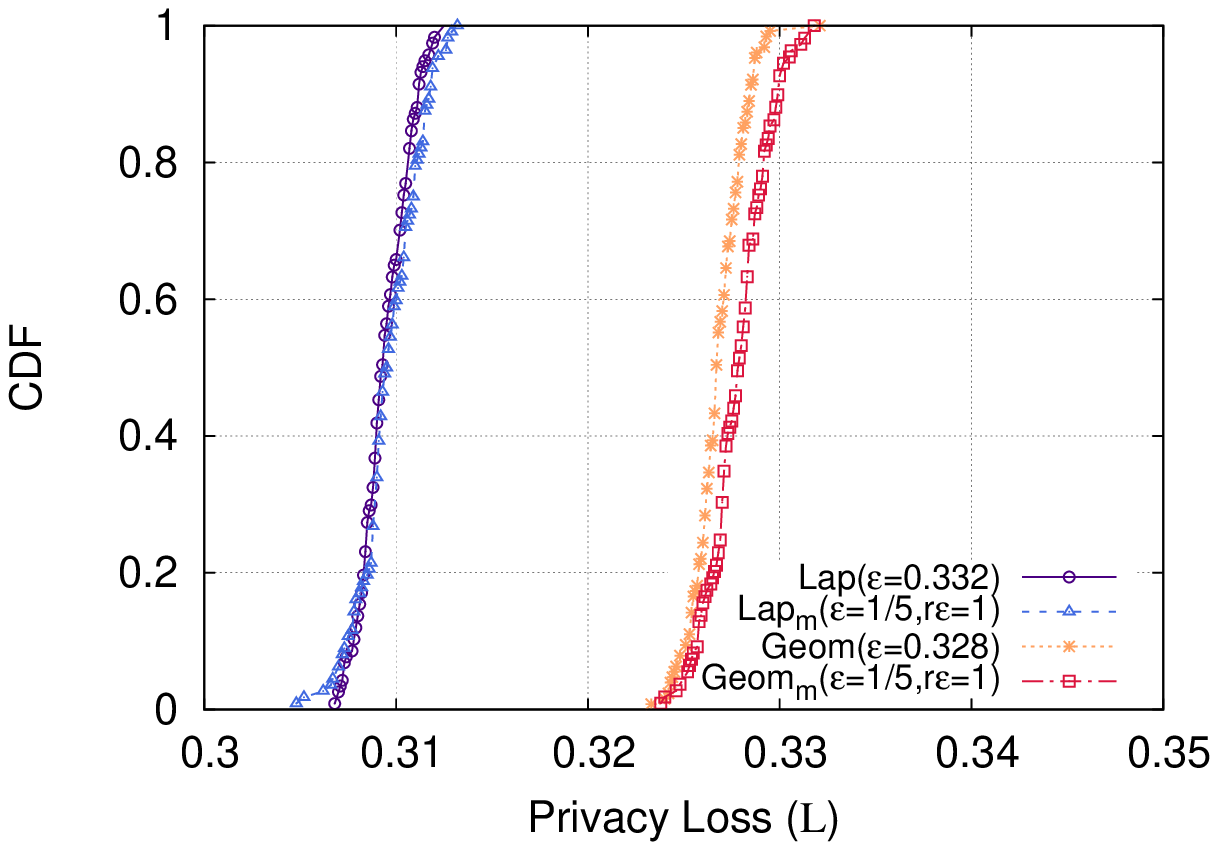}}
%\subfloat[Impact of each unit record]{\label{fig:ndb_user}\includegraphics[width=0.5\columnwidth]{hat1_mat_figs/ndb_difference_vs_user_id}}
\caption{Privacy loss for neighboring databases. Privacy Loss $\L$, geometric and Laplace mixtures, $\e=1/5$, $r\e=1, c_t=5$ and standard geometric and Laplace mechanisms with $\e=0.328$ and $\e=0332$ respectively.}\vspace{-6mm}
\label{fig:privacy}
\end{figure}
\vspace{-5pt}
\noindent{\bf Privacy measures:}
Since differential privacy is based on the premise that there should not be any additional privacy risk when an individual is in or not in the database, we quantify the privacy loss of individuals in the Adult dataset by iteratively removing every user from the dataset. When the neighboring databases are providing the same answer to a query, the privacy loss is almost zero irrespective of the private mechanism used. As the majority (80\%) of neighboring databases are in this category (cf. Fig. \ref{fig:ndb_user}), average privacy loss value is dominated by the similar neighboring databases. Therefore, we analyse the privacy loss $\L$, according to (\ref{ploss}), separately for similar and different neighboring databases in Fig.~\ref{fig:priv_loss_same} and~\ref{fig:priv_loss_diff} respectively. To compensate for rounding errors of continuous Laplace mechanism, we use $\e=0.332$ for standard Laplace mechanism, instead of $\e=0.328$ for standard geometric mechanism.

Fig.~\ref{fig:priv_loss_same} depicts that privacy loss is almost zero, ($<$ 0.008), for all queries when the two  neighboring databases result in the same count. There is a jump in the privacy loss value up to the region closer to the used privacy budget when the two  neighboring databases result in the different counts as shown in Fig.~\ref{fig:priv_loss_diff}.
Please note that in both cases, \emph{piecewise mixture mechanisms} follow the privacy loss of their equivalent standard mechanisms, whilst the utility of \emph{piecewise mixture mechanisms} are superior to standard mechanisms, as shown in Fig. \ref{fig:utility}. Thus, these results experimentally validate the design goals of \emph{piecewise mixture mechanisms} as they provide better utility for the same level of privacy loss.

\section{Conclusion}

In this paper, we presented a novel approach to differentially private statistical distribution mechanisms, giving greater flexibility to database curators, and providing more accuracy to analysts. This has been achieved by deriving piecewise mixture mechanisms, building from the classical Laplace and symmetric geometric mechanisms that have found wide application for ensuring differentially private query release. In terms of a newly defined parameter, general privacy budget, closely related to $\e$ in standard differential privacy, the Laplace and geometric mixture mechanisms demonstrate better performance in terms of various metrics for loss, entropy as well as accuracy and maintaining added noise within suitable bounds. Moreover the piecewise mixture distributions enable mechanism design that can approximate those from truncated distributions, designed for better utility, without sacrificing differential privacy, which may often occur if truncated distributions are applied. Importantly, the properties of the Laplace and geometric piecewise mixture mechanisms are preserved under composition, and are very advantageous for iterative online dataset querying such as that in the classical online private multiplicative weights algorithm, enabling more querying, requiring less dataset updates, and better accuracy in released data. Theoretical analysis, simulation and empirical testing on an open-access dataset has confirmed the favorable properties of the Laplace and geometric piecewise mechanisms, mitigating loss, reducing entropy providing greater accuracy with respect to general privacy, and enabling most noise to be added within close numeric bounds to the true data.

In future work we will be applying the piecewise mixture mechanisms in an online linear iterative setting, determining appropriate thresholds for dataset updates and precise quantification of the increase in number of possible query answers, employing a mixture mechanism, while maintaining differential privacy. In other future work we will seek to evaluate the performance of piecewise mixture mechanisms in the more-relaxed approximate differential privacy setting.

%\bibliographystyle{IEEEtran}
%\bibliography{IEEEabrv,hat_ref}

\begin{thebibliography}{10}
\providecommand{\url}[1]{#1}
\csname url@samestyle\endcsname
\providecommand{\newblock}{\relax}
\providecommand{\bibinfo}[2]{#2}
\providecommand{\BIBentrySTDinterwordspacing}{\spaceskip=0pt\relax}
\providecommand{\BIBentryALTinterwordstretchfactor}{4}
\providecommand{\BIBentryALTinterwordspacing}{\spaceskip=\fontdimen2\font plus
\BIBentryALTinterwordstretchfactor\fontdimen3\font minus
  \fontdimen4\font\relax}
\providecommand{\BIBforeignlanguage}[2]{{%
\expandafter\ifx\csname l@#1\endcsname\relax
\typeout{** WARNING: IEEEtran.bst: No hyphenation pattern has been}%
\typeout{** loaded for the language `#1'. Using the pattern for}%
\typeout{** the default language instead.}%
\else
\language=\csname l@#1\endcsname
\fi
#2}}
\providecommand{\BIBdecl}{\relax}
\BIBdecl

\bibitem{Weber2015618}
R.~H. Weber, ``Internet of things: Privacy issues revisited,'' \emph{Computer
  Law \& Security Review}, vol.~31, no.~5, pp. 618 -- 627, 2015.

\bibitem{Acs2014}
G.~Acs and C.~Castelluccia, ``A case study: Privacy preserving release of
  spatio-temporal density in paris,'' in \emph{Proceedings of the 20th ACM
  SIGKDD International Conference on Knowledge Discovery and Data Mining}, ser.
  KDD '14, 2014, pp. 1679--1688.

\bibitem{el2009evaluating}
K.~El~Emam, F.~K. Dankar, R.~Vaillancourt, T.~Roffey, and M.~Lysyk,
  ``Evaluating the risk of re-identification of patients from hospital
  prescription records,'' \emph{The Canadian journal of hospital pharmacy},
  vol.~62, no.~4, 2009.

\bibitem{bayardo2005data}
R.~J. Bayardo and R.~Agrawal, ``Data privacy through optimal k-anonymization,''
  in \emph{Data Engineering, 2005. ICDE 2005. Proceedings. 21st International
  Conference on}.\hskip 1em plus 0.5em minus 0.4em\relax IEEE, 2005, pp.
  217--228.

\bibitem{Zang2011}
H.~Zang and J.~Bolot, ``Anonymization of location data does not work: A
  large-scale measurement study,'' in \emph{Proceedings of the 17th Annual
  International Conference on Mobile Computing and Networking}, ser. MobiCom
  '11, 2011, pp. 145--156.

\bibitem{Xu2017}
F.~Xu, Z.~Tu, Y.~Li, P.~Zhang, X.~Fu, and D.~Jin, ``Trajectory recovery from
  ash: User privacy is not preserved in aggregated mobility data,'' in
  \emph{Proceedings of the 26th International Conference on World Wide Web},
  ser. WWW '17, 2017, pp. 1241--1250.

\bibitem{Su2017}
J.~Su, A.~Shukla, S.~Goel, and A.~Narayanan, ``De-anonymizing web browsing data
  with social networks,'' in \emph{Proceedings of the 26th International
  Conference on World Wide Web}, ser. WWW '17, 2017, pp. 1261--1269.

\bibitem{marley2011method}
J.~Marley and V.~Leaver, ``A method for confidentialising user-defined tables:
  statistical properties and a risk-utility analysis,'' in \emph{Proceedings of
  the 58th Congress of the International Statistical Institute, ISI}, 2011, pp.
  21--26.

\bibitem{dp-book}
C.~Dwork and A.~Roth, ``{The Algorithmic Foundations of Differential
  Privacy},'' \emph{Foundations and Trends in Theoretical Computer Science},
  vol.~9, pp. 211--407, 2014.

\bibitem{li2010}
C.~Li, M.~Hay, V.~Rastogi, G.~Miklau, and A.~McGregor, ``Optimizing linear
  counting queries under differential privacy,'' in \emph{Proceedings of the
  twenty-ninth ACM SIGMOD-SIGACT-SIGART symposium on Principles of database
  systems}.\hskip 1em plus 0.5em minus 0.4em\relax ACM, 2010, pp. 123--134.
\newpage
\bibitem{dankar2013practicing}
F.~K. Dankar and K.~El~Emam, ``Practicing differential privacy in health care:
  A review.'' \emph{Trans. Data Privacy}, vol.~6, no.~1, pp. 35--67, 2013.

\bibitem{dwork2006}
C.~Dwork, F.~McSherry, K.~Nissim, and A.~Smith, ``Calibrating noise to
  sensitivity in private data analysis,'' in \emph{Theory of Cryptography
  Conference}.\hskip 1em plus 0.5em minus 0.4em\relax Springer, 2006, pp.
  265--284.

\bibitem{hardt2012}
M.~Hardt, K.~Ligett, and F.~McSherry, ``A simple and practical algorithm for
  differentially private data release,'' in \emph{Advances in Neural
  Information Processing Systems}, 2012, pp. 2339--2347.

\bibitem{fan2012}
L.~Fan and L.~Xiong, ``Real-time aggregate monitoring with differential
  privacy,'' in \emph{Proceedings of the 21st ACM international conference on
  Information and knowledge management}.\hskip 1em plus 0.5em minus 0.4em\relax
  ACM, 2012, pp. 2169--2173.

\bibitem{hay2010}
M.~Hay, V.~Rastogi, G.~Miklau, and D.~Suciu, ``Boosting the accuracy of
  differentially private histograms through consistency,'' \emph{Proceedings of
  the VLDB Endowment}, vol.~3, no. 1-2, pp. 1021--1032, 2010.

\bibitem{inusah2006}
S.~Inusah and T.~J. Kozubowski, ``A discrete analogue of the laplace
  distribution,'' \emph{Journal of statistical planning and inference}, vol.
  136, no.~3, pp. 1090--1102, 2006.

\bibitem{shi2011}
E.~Shi, H.~Chan, E.~Rieffel, R.~Chow, and D.~Song, ``Privacy-preserving
  aggregation of time-series data,'' in \emph{Annual Network \& Distributed
  System Security Symposium (NDSS)}.\hskip 1em plus 0.5em minus 0.4em\relax
  Internet Society., 2011.

\bibitem{barthe2013}
G.~Barthe, G.~Danezis, B.~Gr{\'e}goire, C.~Kunz, and S.~Zanella-Beguelin,
  ``Verified computational differential privacy with applications to smart
  metering,'' in \emph{2013 IEEE 26th Computer Security Foundations
  Symposium}.\hskip 1em plus 0.5em minus 0.4em\relax IEEE, 2013, pp. 287--301.

\bibitem{Optimal3}
Q.~Geng and P.~Viswanath, ``The optimal noise-adding mechanism in differential
  privacy,'' \emph{IEEE Transactions on Information Theory}, vol.~62, no.~2,
  pp. 925--951, Feb 2016.

\bibitem{Optimal1}
------, ``Optimal noise adding mechanisms for approximate differential
  privacy,'' \emph{IEEE Transactions on Information Theory}, vol.~62, no.~2,
  pp. 952--969, Feb 2016.

\bibitem{Liu2016}
F.~Liu, ``Generalized gaussian mechanism for differential privacy,''
  \emph{arXiv preprint arXiv:1602.06028}, 2016.

\bibitem{ghosh2012}
A.~Ghosh, T.~Roughgarden, and M.~Sundararajan, ``Universally utility-maximizing
  privacy mechanisms,'' \emph{SIAM Journal on Computing}, vol.~41, no.~6, pp.
  1673--1693, 2012.

\bibitem{brenner2010}
H.~Brenner and K.~Nissim, ``Impossibility of differentially private universally
  optimal mechanisms,'' in \emph{Foundations of Computer Science (FOCS), 2010
  51st Annual IEEE Symposium on}.\hskip 1em plus 0.5em minus 0.4em\relax IEEE,
  2010, pp. 71--80.

\bibitem{gupte2010}
M.~Gupte and M.~Sundararajan, ``Universally optimal privacy mechanisms for
  minimax agents,'' in \emph{Proceedings of the twenty-ninth ACM
  SIGMOD-SIGACT-SIGART symposium on Principles of database systems}.\hskip 1em
  plus 0.5em minus 0.4em\relax ACM, 2010, pp. 135--146.

\bibitem{hardt2010}
M.~Hardt and G.~N. Rothblum, ``A multiplicative weights mechanism for
  privacy-preserving data analysis,'' in \emph{Foundations of Computer Science
  (FOCS), 2010 51st Annual IEEE Symposium on}.\hskip 1em plus 0.5em minus
  0.4em\relax IEEE, 2010, pp. 61--70.

\bibitem{gaboardi2014}
M.~Gaboardi, E.~J.~G. Arias, J.~Hsu, A.~Roth, and Z.~S. Wu, ``Dual query:
  Practical private query release for high dimensional data.'' in \emph{ICML},
  2014, pp. 1170--1178.

\bibitem{zhu2015}
T.~Zhu, P.~Xiong, G.~Li, and W.~Zhou, ``Correlated differential privacy: Hiding
  information in non-iid data set,'' \emph{IEEE Transactions on Information
  Forensics and Security}, vol.~10, no.~2, pp. 229--242, Feb 2015.

\bibitem{boyd2015}
K.~Boyd, E.~Lantz, and D.~Page, ``Differential privacy for classifier
  evaluation,'' in \emph{Proceedings of the 8th ACM Workshop on Artificial
  Intelligence and Security}.\hskip 1em plus 0.5em minus 0.4em\relax ACM, 2015,
  pp. 15--23.
\end{thebibliography}
%\balance
% Generated by IEEEtran.bst, version: 1.14 (2015/08/26)

%\newpage
%\section*{Appendix}
%\vspace{-40mm}
%\begin{table*}[h]

\begin{table*}[h]
\centering
\normalsize APPENDIX
\vspace{8mm}
\scriptsize
\caption{\footnotesize Metrics of loss, $\chi = E(|x|), \chi = \sigma^2(x)$ and entropy $H(x)$ , for standard Laplace, geometric, and piecewise mixtures, for various $c_t$, $r\e$ and $\e$, general privacy budgets $\zeta_\e$, equivalent $\e$ for rounded standard Laplace and for standard geometric $\e =\zeta_\e\{\Gm\}$ \vspace{-2mm}}
\centering
\begin{tabular}{|c|c|c|c|c|c|c|}
\hline
\multirow{2}{*}{${\scriptstyle c_t}$}&\multirow{2}{*}{${\scriptstyle \e}$}&\multirow{2}{*}{${\scriptstyle r\e}$}&\multirow{2}{*}{$\scriptscriptstyle \begin{array}{c}\scriptstyle\zeta_\e\{\Gm,\Lm\} \\ \scriptstyle{\e}_{\Lap}\end{array}$}&
\multirow{2}{*}{$\begin{array}{c}\scriptstyle E(|x|)[\{\Gm,\Lm\}\\ \scriptstyle \{\Geo,\Lap\}]\end{array}$}&\multirow{2}{*}{$\begin{array}{c}\scriptstyle \sigma^2[\{\Gm,\Lm\}\\ \scriptstyle \{\Geo,\Lap\}]\end{array}$}&\multirow{2}{*}{$\begin{array}{c}\scriptstyle H(x)[\{\Gm,\Lm\},\\ \scriptstyle \{\Geo,\Lap\}]\end{array}$}\\
& & & & & &\\
\whline
4&0.1&0.2&\{0.152,0.148\}, 0.154&\{5.44,5.46\}, \{6.57,6.50\}&\{55.72,55.91\},  \{86.71,84.86\}&\{3.38,3.39\},  \{3.58,3.57\}\\\hline
4&0.1&0.4&\{0.212,0.207\}, 0.218&\{3.41,3.42\}, \{4.67,4.59\}&\{19.49,19.62\},  \{44.20,42.32\}&\{2.89,2.89\},  \{3.24,3.22\}\\\hline
4&0.1&0.5&\{0.235,0.228\}, 0.241&\{3.04,3.05\}, \{4.22,4.14\}&\{14.96,15.07\},  \{36.15,34.50\}&\{2.76,2.76\},  \{3.14,3.12\}\\\hline
4&0.1&1&\{0.334,0.316\}, 0.339&\{2.39,2.39\}, \{2.94,2.93\}&\{8.47,8.48\},  \{17.82,17.45\}&\{2.46,2.46\},  \{2.78,2.77\}\\\hline
4&0.167&0.333&\{0.228,0.219\}, 0.231&\{3.55,3.57\}, \{4.35,4.32\}&\{22.88,23.04\},  \{38.41,37.55\}&\{2.96,2.96\},  \{3.17,3.16\}\\\hline
4&0.167&0.667&\{0.297,0.285\}, 0.304&\{2.53,2.55\}, \{3.32,3.28\}&\{10.32,10.41\},  \{22.53,21.72\}&\{2.58,2.58\},  \{2.90,2.88\}\\\hline
4&0.167&0.833&\{0.326,0.310\}, 0.333&\{2.36,2.36\}, \{3.02,2.99\}&\{8.68,8.74\},  \{18.67,18.10\}&\{2.49,2.49\},  \{2.81,2.79\}\\\hline
4&0.167&1.67&\{0.501,0.445\}, 0.491&\{2.05,2.04\}, \{1.92,2.02\}&\{6.27,6.19\},  \{7.81,8.38\}&\{2.30,2.29\},  \{2.36,2.40\}\\\hline
4&0.2&0.4&\{0.263,0.251\}, 0.267&\{3.08,3.11\}, \{3.76,3.74\}&\{17.04,17.20\},  \{28.84,28.23\}&\{2.81,2.82\},  \{3.02,3.02\}\\\hline
4&0.2&0.8&\{0.335,0.319\}, 0.343&\{2.31,2.32\}, \{2.93,2.90\}&\{8.52,8.60\},  \{17.62,17.11\}&\{2.48,2.48\},  \{2.78,2.76\}\\\hline
4&0.2&1&\{0.368,0.347\}, 0.375&\{2.17,2.18\}, \{2.65,2.65\}&\{7.39,7.44\},  \{14.57,14.28\}&\{2.41,2.41\},  \{2.68,2.67\}\\\hline
4&0.2&2&\{0.597,0.512\}, 0.572&\{1.95,1.93\}, \{1.58,1.72\}&\{5.73,5.62\},  \{5.45,6.19\}&\{2.25,2.24\},  \{2.18,2.25\}\\\hline
4&0.25&0.5&\{0.313,0.296\}, 0.317&\{2.61,2.64\}, \{3.15,3.14\}&\{12.10,12.25\},  \{20.28,19.93\}&\{2.65,2.65\},  \{2.85,2.84\}\\\hline
4&0.25&1&\{0.391,0.366\}, 0.398&\{2.07,2.08\}, \{2.49,2.50\}&\{6.88,6.94\},  \{12.93,12.71\}&\{2.37,2.38\},  \{2.62,2.61\}\\\hline
4&0.25&1.25&\{0.431,0.399\}, 0.436&\{1.98,1.98\}, \{2.25,2.28\}&\{6.17,6.20\},  \{10.60,10.61\}&\{2.31,2.31\},  \{2.52,2.52\}\\\hline
4&0.25&2.5&\{0.761,0.619\}, 0.706&\{1.83,1.81\}, \{1.20,1.39\}&\{5.15,5.00\},  \{3.29,4.09\}&\{2.19,2.18\},  \{1.92,2.04\}\\\hline
4&0.5&1&\{0.548,0.497\}, 0.554&\{1.57,1.62\}, \{1.73,1.78\}&\{4.59,4.74\},  \{6.48,6.59\}&\{2.16,2.17\},  \{2.27,2.28\}\\\hline
4&0.5&2&\{0.654,0.576\}, 0.652&\{1.44,1.47\}, \{1.42,1.51\}&\{3.68,3.74\},  \{4.51,4.79\}&\{2.05,2.05\},  \{2.08,2.12\}\\\hline
4&0.5&2.5&\{0.744,0.633\}, 0.723&\{1.42,1.44\}, \{1.23,1.35\}&\{3.56,3.58\},  \{3.45,3.91\}&\{2.03,2.03\},  \{1.95,2.02\}\\\hline
5&0.1&0.2&\{0.145,0.142\}, 0.146&\{5.63,5.64\}, \{6.88,6.82\}&\{58.44,58.63\},  \{95.19,93.37\}&\{3.42,3.42\},  \{3.62,3.61\}\\\hline
5&0.1&0.4&\{0.194,0.189\}, 0.198&\{3.72,3.73\}, \{5.13,5.05\}&\{22.64,22.75\},  \{53.16,51.34\}&\{2.97,2.97\},  \{3.33,3.31\}\\\hline
5&0.1&0.5&\{0.211,0.206\}, 0.216&\{3.39,3.39\}, \{4.70,4.63\}&\{18.08,18.17\},  \{44.66,43.05\}&\{2.86,2.86\},  \{3.24,3.23\}\\\hline
5&0.1&1&\{0.289,0.274\}, 0.292&\{2.79,2.78\}, \{3.41,3.42\}&\{11.40,11.36\},  \{23.72,23.60\}&\{2.61,2.61\},  \{2.93,2.93\}\\\hline
5&0.167&0.333&\{0.216,0.208\}, 0.219&\{3.75,3.77\}, \{4.59,4.57\}&\{25.01,25.17\},  \{42.69,41.93\}&\{3.01,3.01\},  \{3.22,3.21\}\\\hline
5&0.167&0.667&\{0.269,0.258\}, 0.274&\{2.84,2.85\}, \{3.68,3.64\}&\{12.77,12.85\},  \{27.51,26.81\}&\{2.69,2.69\},  \{3.00,2.99\}\\\hline
5&0.167&0.833&\{0.291,0.277\}, 0.295&\{2.68,2.69\}, \{3.39,3.37\}&\{11.14,11.18\},  \{23.47,23.00\}&\{2.61,2.61\},  \{2.92,2.91\}\\\hline
5&0.167&1.67&\{0.428,0.380\}, 0.414&\{2.41,2.39\}, \{2.27,2.40\}&\{8.66,8.54\},  \{10.77,11.73\}&\{2.46,2.45\},  \{2.53,2.57\}\\\hline
5&0.2&0.4&\{0.249,0.238\}, 0.252&\{3.28,3.30\}, \{3.97,3.96\}&\{18.94,19.09\},  \{32.07,31.58\}&\{2.87,2.87\},  \{3.08,3.07\}\\\hline
5&0.2&0.8&\{0.303,0.288\}, 0.308&\{2.60,2.61\}, \{3.25,3.23\}&\{10.73,10.79\},  \{21.57,21.17\}&\{2.60,2.60\},  \{2.88,2.87\}\\\hline
5&0.2&1&\{0.328,0.309\}, 0.332&\{2.48,2.49\}, \{2.99,3.00\}&\{9.61,9.63\},  \{18.41,18.25\}&\{2.54,2.54\},  \{2.80,2.80\}\\\hline
5&0.2&2&\{0.506,0.435\}, 0.479&\{2.28,2.27\}, \{1.90,2.07\}&\{7.92,7.77\},  \{7.66,8.81\}&\{2.41,2.41\},  \{2.35,2.43\}\\\hline
5&0.25&0.5&\{0.297,0.281\}, 0.300&\{2.79,2.82\}, \{3.32,3.32\}&\{13.71,13.86\},  \{22.51,22.28\}&\{2.71,2.72\},  \{2.90,2.90\}\\\hline
5&0.25&1&\{0.353,0.331\}, 0.357&\{2.33,2.35\}, \{2.77,2.78\}&\{8.77,8.82\},  \{15.85,15.73\}&\{2.49,2.50\},  \{2.72,2.72\}\\\hline
5&0.25&1.25&\{0.383,0.355\}, 0.384&\{2.25,2.26\}, \{2.55,2.59\}&\{8.09,8.10\},  \{13.49,13.62\}&\{2.45,2.45\},  \{2.64,2.65\}\\\hline
5&0.25&2.5&\{0.637,0.520\}, 0.582&\{2.13,2.11\}, \{1.47,1.70\}&\{7.06,6.89\},  \{4.77,5.99\}&\{2.36,2.35\},  \{2.11,2.23\}\\\hline
5&0.5&1&\{0.529,0.479\}, 0.532&\{1.67,1.72\}, \{1.81,1.86\}&\{5.31,5.47\},  \{6.98,7.15\}&\{2.22,2.24\},  \{2.31,2.32\}\\\hline
5&0.5&2&\{0.593,0.526\}, 0.590&\{1.58,1.62\}, \{1.59,1.67\}&\{4.56,4.64\},  \{5.53,5.83\}&\{2.15,2.16\},  \{2.19,2.22\}\\\hline
5&0.5&2.5&\{0.649,0.561\}, 0.632&\{1.56,1.60\}, \{1.44,1.56\}&\{4.46,4.51\},  \{4.58,5.09\}&\{2.14,2.14\},  \{2.09,2.15\}\\\hline
6&0.1&0.2&\{0.139,0.136\}, 0.140&\{5.82,5.83\}, \{7.17,7.12\}&\{61.50,61.67\},  \{103.25,101.58\}&\{3.45,3.45\},  \{3.66,3.66\}\\\hline
6&0.1&0.4&\{0.179,0.175\}, 0.182&\{4.04,4.05\}, \{5.55,5.48\}&\{26.15,26.25\},  \{62.15,60.39\}&\{3.04,3.05\},  \{3.41,3.40\}\\\hline
6&0.1&0.5&\{0.193,0.188\}, 0.197&\{3.73,3.73\}, \{5.14,5.08\}&\{21.56,21.63\},  \{53.37,51.80\}&\{2.95,2.95\},  \{3.33,3.32\}\\\hline
6&0.1&1&\{0.257,0.243\}, 0.257&\{3.17,3.16\}, \{3.85,3.88\}&\{14.71,14.64\},  \{30.19,30.34\}&\{2.73,2.73\},  \{3.05,3.05\}\\\hline
6&0.167&0.333&\{0.207,0.199\}, 0.209&\{3.95,3.96\}, \{4.80,4.78\}&\{27.30,27.46\},  \{46.54,45.91\}&\{3.05,3.06\},  \{3.26,3.26\}\\\hline
6&0.167&0.667&\{0.248,0.238\}, 0.251&\{3.12,3.13\}, \{3.99,3.97\}&\{15.44,15.50\},  \{32.32,31.73\}&\{2.78,2.78\},  \{3.08,3.07\}\\\hline
6&0.167&0.833&\{0.265,0.253\}, 0.268&\{2.98,2.99\}, \{3.73,3.72\}&\{13.81,13.83\},  \{28.26,27.92\}&\{2.72,2.72\},  \{3.01,3.01\}\\\hline
6&0.167&1.67&\{0.374,0.334\}, 0.360&\{2.74,2.72\}, \{2.61,2.76\}&\{11.30,11.15\},  \{14.12,15.48\}&\{2.59,2.59\},  \{2.66,2.71\}\\\hline
6&0.2&0.4&\{0.239,0.228\}, 0.241&\{3.46,3.48\}, \{4.15,4.14\}&\{20.95,21.10\},  \{34.90,34.51\}&\{2.92,2.93\},  \{3.12,3.12\}\\\hline
6&0.2&0.8&\{0.280,0.266\}, 0.283&\{2.86,2.87\}, \{3.52,3.52\}&\{13.08,13.14\},  \{25.31,25.01\}&\{2.70,2.70\},  \{2.96,2.95\}\\\hline
6&0.2&1&\{0.299,0.282\}, 0.301&\{2.76,2.77\}, \{3.29,3.31\}&\{11.98,11.99\},  \{22.18,22.14\}&\{2.65,2.65\},  \{2.89,2.89\}\\\hline
6&0.2&2&\{0.439,0.379\}, 0.413&\{2.59,2.57\}, \{2.21,2.40\}&\{10.29,10.11\},  \{10.23,11.79\}&\{2.55,2.54\},  \{2.50,2.58\}\\\hline
6&0.25&0.5&\{0.286,0.270\}, 0.288&\{2.96,2.98\}, \{3.46,3.46\}&\{15.36,15.51\},  \{24.37,24.22\}&\{2.77,2.77\},  \{2.94,2.94\}\\\hline
6&0.25&1&\{0.327,0.307\}, 0.330&\{2.57,2.58\}, \{3.00,3.02\}&\{10.73,10.78\},  \{18.53,18.50\}&\{2.59,2.59\},  \{2.80,2.80\}\\\hline
6&0.25&1.25&\{0.349,0.324\}, 0.349&\{2.50,2.51\}, \{2.81,2.85\}&\{10.07,10.08\},  \{16.26,16.49\}&\{2.56,2.56\},  \{2.74,2.75\}\\\hline
6&0.25&2.5&\{0.545,0.449\}, 0.496&\{2.39,2.38\}, \{1.75,2.00\}&\{9.08,8.89\},  \{6.57,8.22\}&\{2.49,2.48\},  \{2.28,2.39\}\\\hline
6&0.5&1&\{0.517,0.468\}, 0.519&\{1.75,1.80\}, \{1.85,1.91\}&\{5.91,6.09\},  \{7.31,7.50\}&\{2.27,2.28\},  \{2.33,2.35\}\\\hline
6&0.5&2&\{0.556,0.497\}, 0.554&\{1.68,1.73\}, \{1.71,1.78\}&\{5.32,5.43\},  \{6.30,6.61\}&\{2.22,2.23\},  \{2.26,2.28\}\\\hline
6&0.5&2.5&\{0.591,0.518\}, 0.579&\{1.67,1.71\}, \{1.60,1.70\}&\{5.24,5.32\},  \{5.56,6.04\}&\{2.21,2.22\},  \{2.19,2.24\}\\\hline
7&0.1&0.2&\{0.134,0.131\}, 0.135&\{6.02,6.03\}, \{7.43,7.38\}&\{64.82,64.99\},  \{110.86,109.31\}&\{3.48,3.48\},  \{3.70,3.69\}\\\hline
7&0.1&0.4&\{0.168,0.163\}, 0.170&\{4.35,4.36\}, \{5.94,5.88\}&\{29.97,30.06\},  \{71.06,69.37\}&\{3.11,3.11\},  \{3.48,3.47\}\\\hline
7&0.1&0.5&\{0.179,0.174\}, 0.182&\{4.06,4.06\}, \{5.55,5.50\}&\{25.37,25.42\},  \{62.14,60.66\}&\{3.03,3.03\},  \{3.41,3.40\}\\\hline
7&0.1&1&\{0.232,0.219\}, 0.231&\{3.54,3.53\}, \{4.28,4.32\}&\{18.36,18.27\},  \{37.12,37.57\}&\{2.84,2.84\},  \{3.15,3.16\}\\\hline
7&0.167&0.333&\{0.200,0.192\}, 0.201&\{4.13,4.15\}, \{4.97,4.96\}&\{29.70,29.86\},  \{49.96,49.47\}&\{3.10,3.10\},  \{3.30,3.30\}\\\hline
7&0.167&0.667&\{0.232,0.223\}, 0.235&\{3.39,3.40\}, \{4.26,4.25\}&\{18.25,18.30\},  \{36.87,36.37\}&\{2.86,2.86\},  \{3.15,3.14\}\\\hline
7&0.167&0.833&\{0.246,0.235\}, 0.248&\{3.27,3.27\}, \{4.02,4.03\}&\{16.64,16.65\},  \{32.89,32.67\}&\{2.81,2.81\},  \{3.09,3.09\}\\\hline
7&0.167&1.67&\{0.334,0.299\}, 0.321&\{3.05,3.03\}, \{2.94,3.10\}&\{14.12,13.94\},  \{17.79,19.52\}&\{2.71,2.70\},  \{2.78,2.83\}\\\hline
7&0.2&0.4&\{0.231,0.221\}, 0.233&\{3.63,3.65\}, \{4.29,4.29\}&\{23.00,23.15\},  \{37.34,37.05\}&\{2.97,2.97\},  \{3.15,3.15\}\\\hline
7&0.2&0.8&\{0.263,0.250\}, 0.265&\{3.11,3.12\}, \{3.76,3.76\}&\{15.52,15.57\},  \{28.75,28.56\}&\{2.78,2.78\},  \{3.02,3.02\}\\\hline
7&0.2&1&\{0.278,0.262\}, 0.279&\{3.02,3.02\}, \{3.55,3.58\}&\{14.45,14.45\},  \{25.77,25.82\}&\{2.74,2.74\},  \{2.97,2.97\}\\\hline
7&0.2&2&\{0.388,0.339\}, 0.366&\{2.86,2.85\}, \{2.51,2.72\}&\{12.77,12.58\},  \{13.09,15.03\}&\{2.66,2.65\},  \{2.63,2.70\}\\\hline
%7&0.25&0.5&\{0.277,0.262\}, 0.279&\{3.11,3.13\}, \{3.56,3.58\}&\{17.00,17.15\},  \{25.89,25.83\}&\{2.82,2.82\},  \{2.97,2.97\}\\\hline
%7&0.25&1&\{0.308,0.290\}, 0.310&\{2.77,2.79\}, \{3.19,3.22\}&\{12.70,12.75\},  \{20.90,20.96\}&\{2.68,2.68\},  \{2.86,2.87\}\\\hline
\end{tabular}
\label{tab:met1}
\end{table*}

\end{document}